\newtheorem{theorem}{Theorem}[section]
\newtheorem{remark}[theorem]{Remark}
\newtheorem{example}[theorem]{Example}
\newtheorem{proposition}[theorem]{Proposition}
\newtheorem{corollary}[theorem]{Corollary}
\newtheorem{lemma}[theorem]{Lemma}
\newtheorem{definition}[theorem]{Definition}
\newcommand{\V}{{\cal V}}
\newcommand{\E}{{\cal E}}
\newcommand{\ELL}{{\cal L}}
\newcommand{\J}{{\cal J}}
\begin{document}

\title{The Kuramoto model on oriented and signed graphs\thanks{This work has been supported by ETH Z\"urich funding and the SNSF AP Energy grant PYAPP2\_154275}}
\author{Robin Delabays \footnotemark[2] \footnotemark[3]
\and Philippe Jacquod \footnote{University of Applied Science of Western Switzerland, CH-1950 Sion, Switzerland, ({\tt robin.delabays@hevs.ch}, {\tt philippe.jacquod@hevs.ch}). }
\and Florian D\"orfler \footnote{Automatic Control Laboratory, Swiss Federal Institute of Technology (ETH) Z\"urich,  Switzerland, ({\tt dorfler@ethz.ch}).}
}

% \headers{The Kuramoto model on oriented and signed graphs}{R. Delabays, P. Jacquod, and F. D\"orfler}

\date{\today}

\maketitle

\begin{abstract}
 Many real-world systems of coupled agents exhibit directed interactions, meaning that the influence of an agent on another is not reciprocal. 
 Furthermore, interactions usually do not have identical amplitude and/or sign. 
 To describe synchronization phenomena in such systems, we use a generalized Kuramoto model with oriented, weighted and signed interactions. 
 Taking a bottom-up approach, we investigate the simplest possible oriented networks, namely acyclic oriented networks and oriented cycles. 
 These two types of networks are fundamental building blocks from which many general oriented networks can be constructed. 
 For acyclic, weighted and signed networks, we are able to completely characterize synchronization properties through necessary and sufficient conditions, which we show are optimal. 
 Additionally, we prove that if it exists, a stable synchronous state is unique. 
 In oriented, weighted and signed cycles with identical natural frequencies, we show that the system globally synchronizes and that the number of stable synchronous states is finite. 
\end{abstract}

\paragraph{Keyword.}
Kuramoto model, synchronization, directed graphs, oriented graphs, weighted graphs, signed graphs.

\paragraph{AMS subject classifications.}
34D06, 37N35

\section{Introduction}
Since its introduction in 1975~\cite{Kur75}, the Kuramoto model has become a standard mathematical model to describe the large 
variety of synchronization phenomena encountered in natural and man-made systems. 
References~\cite{Str00,Ace05,Dor14} give some extensive surveys about it. 
In its initial formulation, the Kuramoto model describes the time evolution of a group of $n$ oscillators, each characterized by a natural frequency $\omega_i\in\mathbb{R}$, 
with identical and symmetric coupling $K/n>0$, 
\begin{align}\label{eq:kuramoto_trad}
 \dot{\theta}_i &= \omega_i - \frac{K}{n}\sum_{j=1}^n\sin(\theta_i-\theta_j)\, , & i &\in \{1,...,n\}\, .
\end{align}
It has been shown~\cite{Erm85,Hem93} that under the assumption that the distribution of natural frequencies has compact support, there exists a critical coupling strength, $K_c>0$, 
such that the oscillators frequency-synchronize, meaning that for all $i,j\in\{1,...,n\}$, 
\begin{align}
 \lim_{t\to\infty}\dot{\theta}_i &= \lim_{t\to\infty}\dot{\theta}_j\, ,
\end{align}
if $K>K_c$. 

The Kuramoto model achieves a good compromise between the simplicity of expression, allowing for an analytical approach, and the complexity of synchronization behaviors described. 
During the last decades, the interest in the Kuramoto model increased in many fields of science and engineering. 
It has been shown that it can be used to describe synchronization phenomena in domains as various as biology~\cite{Buc88,Erm91,Lu16}, physics~\cite{Wie96,Wie98} and engineering~\cite{Dor13}. 
To better represent real-world systems, the Kuramoto model has been generalized in many different ways~\cite{Dor13,Mal13,Chi18}. 
One of the main generalizations is to consider the Kuramoto model with interactions given by an arbitrary graph, 
\begin{align}
 \dot{\theta}_i &= \omega_i - \sum_{j=1}^na_{ij}\sin(\theta_i-\theta_j)\, , & i &\in \{1,...,n\}\, ,
\end{align}
where $a_{ij}$ is the $(i,j)^{\rm th}$ element of the weighted adjacency matrix of the considered graph. 

In the vast majority of the literature on the Kuramoto model, the interactions are considered symmetric and positive, i.e., $a_{ij}=a_{ji}>0$. 
However, in some systems exhibiting synchronizing behaviors, the interaction can be nonsymmetric ($a_{ij}\neq a_{ji}$) 
or even unidirectional ($a_{ij}\neq0\implies a_{ji}=0$)~\cite{Res06,Pal07,Sep08,Mas07}. 
Some systems also exhibit negative couplings, observed in particular in social networks, when some agents adjust their ideas in opposition to some others, 
and in interactions between neurons that can be excitatory ($a_{ij}>0$) or inhibitory ($a_{ij}<0$)~\cite{Pal07,Sep08,Bor03,Sch16}. 
In this manuscript, we focus on the Kuramoto model with interactions in a unique direction, which we refer to as \emph{oriented interactions}, 
to be distinguished from \emph{directed interactions}, which can go in both directions between two vertices. 
We consider general coupling weights, that can be positive or negative, referred to as \emph{signed interactions}. 

Investigations about the Kuramoto model with mixed positive and negative couplings showed that with all-to-all coupling, 
synchronization depends on the ratio between the number of positive and negative couplings~\cite{Hon11,Bur12}. 

In~\cite{Res06}, the authors give some estimation of the critical coupling $K_c$ in the limit of large ($n\gg0$) oriented networks, as well as networks with mixed positive and negative couplings. 
Some conditions for the existence of a frequency-synchronous state in general directed networks are given in~\cite{Ati13a}, 
and~\cite{Ati13b} gives some conditions for the existence of frequency-synchronous states in complete directed, weighted and signed networks. 
The authors of~\cite{Ska16a} show that in directed networks, synchronization is favored if the natural frequencies and the weighted out-degrees 
(called \emph{in-degrees} in their article) of the oscillators are correlated. 

In this manuscript, we are interested in determining conditions for global synchronization of the Kuramoto model with oriented interactions. 
All previously cited references consider either all-to-all coupling or general interaction graphs, without restriction on the type of graphs considered. 
We choose here a bottom-up approach, starting with the simplest oriented networks possible, i.e., oriented acyclic networks and oriented cycles. 
For weighted and signed oriented acyclic networks, we give some explicit necessary and sufficient conditions for almost global synchronization of the system. 
For weighted and signed oriented cycles, with identical natural frequencies, we prove global synchronization and show that the final synchronization frequency can take only discrete values. 

The Kuramoto model on acyclic oriented graphs has been investigated in~\cite{Ha14}, where the authors give some local subdomains of the state space where initial condition are guaranteed to lead 
to synchronization. 
In section~\ref{sec:acyclic}, we almost completely characterize the synchronization properties of the Kuramoto model on weighted and signed oriented acyclic graphs. 
We give some necessary and sufficient conditions on the system's parameters (edge weights and natural frequencies) for the existence of a globally exponentially stable synchronous state. 
Furthermore, this synchronous state is unique. 
We show through examples that our necessary and sufficient conditions cannot be sharpened in general. 
It is remarkable that our results hold regardless of the signs of the couplings in the network. 

On oriented cycles, Rogge and Aeyels~\cite{Rog04} obtained an upper bound on the number of stable synchronous states of the Kuramoto model with cyclic oriented interactions, 
for general natural frequency distribution. 
To the best of our knowledge, the best estimate of the region of synchronization for identical natural frequencies is given in~\cite{Ha12}, 
where the authors explicitly determine a subdomain of the state space where the system always synchronizes. 
They further identify subdomains of the basins of attraction of the various synchronous states. 
In section~\ref{sec:cycles}, we show that, in the case of identical natural frequencies, the system globally synchronizes, again regardless of the sign and magnitude of the couplings. 
We further prove that each stable synchronous state corresponds to a different synchronous frequency, and the the number of such synchronous frequencies is finite. 
We express them as solutions of an equation. 
Furthermore, we observe numerically that the winding number of a synchronous state is correlated to the winding number of the initial conditions. 
The distribution of final winding numbers given the initial winding number is Gaussian. 
Surprisingly, the variance of this distribution is identical for all initial winding numbers. 

In section~\ref{sec:combi}, we illustrate the difficulty to generalize our results to more complex interaction graphs. 
Using oriented acyclic networks and oriented cycles as building blocks, we construct more general oriented graphs where we attempt to extend the results of the previous sections. 
We show that in some cases our results can be directly generalized, but we also identify some cases where the system synchronizes or not depending on the initial conditions of the oscillators. 
The Kuramoto dynamics reveal to be rather complex in general oriented and signed networks. 

\subsection{Definitions and properties}
We consider a generalized Kuramoto model with oriented weighted, and real-valued (not necessarily positive) interactions
\begin{align}\label{eq:main}
 \dot{\theta}_i &= \omega_i -\sum_{j=1}^na_{ij}\sin(\theta_i-\theta_j)\, , & i &\in\{1,...,n\}\, ,
\end{align}
where $\theta_i\in\mathbb{S}^1\simeq\mathbb{R}/(2\pi\mathbb{Z})$ and $\omega_i\in\mathbb{R}$ are the $i^{\rm th}$ oscillator's angle and natural frequency respectively, 
$a_{ij}\in\mathbb{R}$ is the coupling constant between oscillators $i$ and $j$, and $n$ is the number of oscillators. 
The coupling is a priori not symmetric, we allow negative edge weights, and $a_{ij}=0$ means that there is no edge from $i$ to $j$. 
The interaction graph is then directed, signed and weighted. 
The oscillators' angles are aggregated in the \textit{state} $\bm{\theta}\in\mathbb{T}^n$. 
For any $c\in\mathbb{R}$ and $\bm{k}\in\mathbb{Z}^n$, the states $\bm{\theta}$ and $\bm{\theta}+2\pi\cdot\bm{k}+c\cdot(1,...,1)$ are equivalent 
with respect to the dynamics of~\eqref{eq:main}. 
We then consider any state up to a constant shift of all angles and up to addition of integer multiples of $2\pi$. 

A \textit{weighted, signed and directed graph} ${\cal G}=(\V,\E,A)$ is defined by a set of vertices $\V\coloneqq\{1,...,n\}$, a set of edges 
$\E\subset\V\times\V$ which are ordered pairs of vertices, and a weighted adjacency matrix $A$, 
where $a_{ij}\neq0$ if and only if the edge $(i,j)$ from vertex $i$ to vertex $j$ belongs to $\E$. 
In this manuscript, we consider finite ($n<\infty$) \emph{oriented} graphs ($a_{ij}\neq 0 \implies a_{ji}=0$) without self-loops ($a_{ii}=0$). 
In an oriented graph, if there exists an edge from vertex $i$ to vertex $j$ ($a_{ij}\neq 0$), we call $i$ the \textit{parent} of $j$, and $j$ is the \textit{child} of $i$. 
Vertices without children are called \textit{leaders}, and the set of leaders is $\ELL\subset\V$. 
A \textit{path} from vertex $i$ to vertex $j$ is a sequence of edges ${\cal P}_{ij}\coloneqq(\ell_1,...,\ell_L)$, where $i$ is the parent of edge $\ell_1$, $j$ is the child of edge $\ell_L$, 
and for all $p=1,...,L$, $\ell_p$ is an edge of ${\cal G}$ and the parent of $\ell_{p+1}$ is the child of $\ell_p$.  
A path is \textit{simple} if it does not go twice through an edge. 
A \textit{cycle} is a simple path from a vertex to itself. 
If a graph does not contain any cycle, it is \textit{acyclic}. 

\begin{remark}
 We stress out our arbitrary convention of calling \textit{parent} (resp. \textit{child}) the source (resp. target) of an edge. 
 In our definitions, an oscillator is ``watching'' its children, i.e., its dynamics are directly influenced by them. 
 On the contrary, the dynamics of an oscillator is not (directly) influenced by the state of its parents. 
\end{remark}

\subsection{Local order parameter}
For the standard Kuramoto model, with undirected interactions given by \eqref{eq:kuramoto_trad}, the \emph{order parameter}, 
\begin{align}
 re^{i\psi} &\coloneqq \frac{1}{n}\sum_{j=k}^ne^{i\theta_k}\, ,
\end{align}
gives a measure of alignment of the oscillators. 
In particular, the dynamics of each oscillator can be written with respect to the order parameter only~\cite{Dor14}. 
Following the same idea, we define the \textit{local order parameter} of vertex $j$ (a similar definition was given in~\cite{Res06}), 
\begin{align}\label{eq:local_order_param}
 r_je^{i\psi_j} &\coloneqq \sum_{k\neq j}a_{jk}e^{i\theta_k}\, ,
\end{align}
which is equivalently rewritten in real and imaginary parts as 
\begin{align}\label{eq:local_order_param_var}
 r_j\cos(\psi_j) &= \sum_ka_{jk}\cos(\theta_k)\, , &  r_j\sin(\psi_j) &= \sum_ka_{jk}\sin(\theta_k)\, ,
\end{align}
with $\psi_j\in\mathbb{S}^1$ and $r_j\in\mathbb{R}_{\geq 0}$. 
The dynamics~\eqref{eq:main} of each oscillator can then be rewritten 
\begin{align}\label{eq:kuramoto_order}
 \dot{\theta}_i &= \omega_i - r_i\sin(\theta_i-\psi_i)\, , & i &\in \{1,...,n\}\, ,
\end{align}
where we recall that $r_i$ and $\psi_i$ depend on other angles, and are therefore time-dependent. 
We define the \textit{weighted out-degree} of vertex $j$, 
\begin{align}\label{eq:out-degree}
 d_j^{\rm out} &\coloneqq \sum_{k=1}^n|a_{jk}|\, ,
\end{align}
and indexing the children of $j$ from $1$ to $n_j$ such that 
\begin{align}
 |a_{j1}| &\geq |a_{j2}| \geq \ldots \geq |a_{jn_j}|\, ,
\end{align}
we define the \textit{residual outgoing weight} 
\begin{align}\label{eq:row}
 \rho_j &\coloneqq \max\left\{0,|a_{j1}| - \sum_{k=2}^{n_j}|a_{jk}|\right\}\, .
\end{align}

\begin{proposition}\label{prop:order_int}
 For any oriented, weighted and signed graph, the amplitude of the local order parameter $r_j$ belongs to the interval $[\rho_j,d_j^{\rm out}]$. 
 Furthermore, we can construct angle distributions realizing both end values of this interval. 
\end{proposition}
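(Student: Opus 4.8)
The plan is to recognize the amplitude $r_j=\left|\sum_{k}a_{jk}e^{i\theta_k}\right|$ as the modulus of a sum of $n_j$ planar vectors whose magnitudes are \emph{fixed} at $|a_{jk}|$ but whose directions are \emph{free}: since each nonzero $a_{jk}$ is real and $\theta_k$ ranges over all of $\mathbb{S}^1$, the term $a_{jk}e^{i\theta_k}$ can be made to point in any direction of the plane while keeping modulus $|a_{jk}|$. Both inequalities then follow from the triangle inequality. For the upper bound I would write $r_j\leq\sum_k|a_{jk}|\,|e^{i\theta_k}|=\sum_k|a_{jk}|=d_j^{\rm out}$. For the lower bound I would isolate the heaviest term and use the reverse triangle inequality, $r_j=\left|a_{j1}e^{i\theta_1}+\sum_{k\geq2}a_{jk}e^{i\theta_k}\right|\geq|a_{j1}|-\sum_{k\geq2}|a_{jk}|$; combining this with the trivial bound $r_j\geq0$ yields $r_j\geq\rho_j$.

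It then remains to show both endpoints are attained by suitable angle configurations. The maximum is easy: choosing each $\theta_k$ so that $a_{jk}e^{i\theta_k}=|a_{jk}|$ (i.e.\ $\theta_k\in\{0,\pi\}$ according to the sign of $a_{jk}$) aligns every vector along the positive real axis, giving $r_j=\sum_k|a_{jk}|=d_j^{\rm out}$. For the minimum I would split into two cases. If a single weight dominates, so that $|a_{j1}|>\sum_{k\geq2}|a_{jk}|$ and $\rho_j>0$, I point the heaviest vector along the positive real axis and all others along the negative real axis (again a sign-dependent choice $\theta_k\in\{0,\pi\}$); the sum is then the real number $|a_{j1}|-\sum_{k\geq2}|a_{jk}|=\rho_j$, so $r_j=\rho_j$.

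The remaining and genuinely nontrivial case is $\rho_j=0$, i.e.\ $|a_{j1}|\leq\sum_{k\geq2}|a_{jk}|$, where I must exhibit a configuration with $r_j=0$. This amounts to arranging $n_j$ vectors of prescribed lengths $|a_{j1}|\geq\dots\geq|a_{jn_j}|$ so that they close up into a polygon, which is possible precisely because the longest length does not exceed the sum of the others --- the generalized (polygon) triangle inequality. I expect this closing construction to be the main obstacle, since it is the only step that is not a one-line triangle-inequality estimate. I would establish it by induction on $n_j$: peeling off the heaviest vector $a_{j1}e^{i\theta_1}$, one checks that the remaining lengths still satisfy $|a_{j1}|\in\big[\max\{0,|a_{j2}|-\sum_{k\geq3}|a_{jk}|\},\,\sum_{k\geq2}|a_{jk}|\big]$ (using $|a_{j1}|\geq|a_{j2}|$ for the lower end and the case hypothesis for the upper end), so by the inductive range statement the tail sum $\sum_{k\geq2}a_{jk}e^{i\theta_k}$ can be made to have modulus exactly $|a_{j1}|$, and by rotational freedom it can be pointed opposite to $a_{j1}e^{i\theta_1}$, cancelling it. Alternatively, one can note that the continuous map $(\theta_1,\dots,\theta_{n_j})\mapsto r_j$ has connected compact image $[r_j^{\min},r_j^{\max}]\subset[\rho_j,d_j^{\rm out}]$, so the endpoints are automatically attained once they are shown to equal $\rho_j$ and $d_j^{\rm out}$, reducing the whole statement to the two explicit alignments above together with the polygon-closing fact.
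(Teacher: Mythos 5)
Your proof is correct, and for the bounds themselves it follows the paper's route exactly: triangle inequality for $r_j\leq d_j^{\rm out}$, reverse triangle inequality plus $r_j\geq 0$ for $r_j\geq\rho_j$, and the aligned configuration (with angles shifted by $\pi$ to absorb negative signs) for the upper endpoint. The genuine difference is in how the lower endpoint is realized. The paper uses a single configuration, $\theta_2=\dots=\theta_{n_j}=\theta_1\pm\pi$, which gives $r_j=\bigl|\,|a_{j1}|-\sum_{k\geq2}|a_{jk}|\,\bigr|$; this equals $\rho_j$ only when $|a_{j1}|\geq\sum_{k\geq2}|a_{jk}|$. In the case $\rho_j=0$ with $|a_{j1}|<\sum_{k\geq2}|a_{jk}|$, that configuration yields the strictly positive value $\sum_{k\geq2}|a_{jk}|-|a_{j1}|$, so the paper's construction does not actually attain the endpoint there --- a gap that your case distinction correctly isolates and that your polygon-closing argument repairs: arranging the tail to have modulus exactly $|a_{j1}|$ (possible since $|a_{j1}|$ lies between the tail's own residual weight and its total weight) and rotating it to oppose the heaviest vector is precisely the missing step. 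One organizational point: as written, your induction invokes the \emph{full range statement} (every value in the tail's interval is achievable) as inductive hypothesis, so you should formulate the induction as proving ``the set of achievable moduli equals $[\rho,d]$'' at every level, using your compactness/connectedness remark (the continuous image of the torus is a closed interval) to fill in the interior once the endpoints are attained. With that phrasing your argument closes cleanly, covers the case the paper's construction misses, and in fact proves slightly more than the proposition states, namely that every value in $[\rho_j,d_j^{\rm out}]$ is realized by some angle configuration.
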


\begin{proof}
 The upper bound is a direct application of the triangle inequality,
 \begin{align}
  r_j &= \left|\sum_ka_{jk}e^{i\theta_k}\right| \leq \sum_k|a_{jk}e^{i\theta_k}| = \sum_k|a_{jk}| = d_j^{\rm out}\, .
 \end{align}
 
 The amplitude of the local order parameter $r_j$ is by definition larger or equal to zero, and the triangle inequality gives also 
 \begin{align}
  r_j &= \left|\sum_ka_{jk}e^{i\theta_k}\right| \geq \left|a_{j1}e^{i\theta_1}\right| - \left|\sum_{k=2}^{n_j}a_{jk}e^{i\theta_k}\right| \geq |a_{j1}| - \sum_{k=2}^{n_j}|a_{jk}|\, .
 \end{align}
 By definition, $\rho_j$ is then a lower bound for $r_j$. 
 This proves the first part of the proposition. 
 
 We prove the second part for nonnegative weights. 
 The argument can be easily adjusted in case some weights are negative. 
 The value $r_j=\rho_j$ is realized when the angles $\theta_k$, $k=2,...,n_j$ are all equal and $\pi$ apart from $\theta_1$: $\theta_2=...=\theta_{n_j}=\theta_1\pm\pi$. 
 When all these angles are identical, $\theta_1=...=\theta_{n_j}$, then $r_j=d^{\rm out}_j$. 
 Thus both ends of the interval can be reached for some angle distributions, both cases are illustrated in Figure~\ref{fig:r_bounds}.
\end{proof}

\begin{figure}
 \centering
 \includegraphics[width=.6\textwidth]{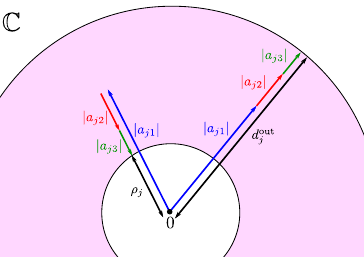}
 \caption{\it The possible complex values of the local order parameter for a vertex with 3 children are in the pink area. 
 The two extreme cases of smallest/largest order parameter amplitude are depicted. 
 When all angles are aligned, the amplitude of the order parameter is $d_j^{\rm out}$ and when all angles are $\pi$ apart from $\theta_1$, the order parameter has amplitude $\rho_j$. }
 \label{fig:r_bounds}
\end{figure}

\begin{remark}
 In the particular case where $j$ has a single out-going edge, then the residual outgoing weight and the weighted out-degree coincide, $\rho_j=d_j^{\rm out}=r_j$. 
 When $j$ has two children, $\rho_j=|a_{j1}|-|a_{j2}|$, which is zero if and only if the weights have the same absolute value. 
 In general, if $j$ has three or more children, $\rho_j$ will be zero if weights are not too different.
\end{remark}

\subsection{Synchronization}
If not specified otherwise, \textit{synchronization} refers to frequency-synchronization, which means $\dot{\theta}_i=\dot{\theta}_j$ for all $i,j$. 
At a synchronous state of the oriented Kuramoto model, all angles rotate at the same frequency and angle differences are constant in time. 
It is necessary for synchronization that all leaders have the same natural frequency and that the deviation from leader frequency is bounded by the weighted out-degree, at each oscillator. 

\begin{proposition}\label{prop:nec}
 Let ${\cal G}$ be an oriented, weighted, and signed graph, with at least one leader. 
 If the dynamical system~(\ref{eq:main}) with interaction graph ${\cal G}$ has a synchronous state, then 
  
  (i) the leaders have identical natural frequency $\omega_i=\omega_{\rm L}$ for all $i\in\ELL$; and 
  
  (ii) for all $i\in\{1,...,n\}$, 
  \begin{align}\label{eq:ubound}
   |\omega_i - \omega_{\rm L}| &\leq d_i^{\rm out}\, .
  \end{align}
\end{proposition}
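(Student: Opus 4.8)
The plan is to exploit the special structure of the oriented dynamics, first for the leaders and then to propagate the consequence to the whole network through the order-parameter form \eqref{eq:kuramoto_order}. The key observation is that a leader has no children, so by the convention following \eqref{eq:main} its outgoing weights all vanish, $a_{ij}=0$ for every $j$. Hence the sum in \eqref{eq:main} is empty and a leader obeys the trivial dynamics $\dot{\theta}_i=\omega_i$ at all times, independently of the rest of the network.

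For part (i), I would argue as follows. By definition a synchronous state is one for which $\dot{\theta}_i=\dot{\theta}_j$ for all $i,j$; call this common value $\omega_{\rm sync}$. Applying the leader identity $\dot{\theta}_i=\omega_i$ at such a state forces $\omega_i=\omega_{\rm sync}$ for every leader $i\in\ELL$. Since $\omega_{\rm sync}$ does not depend on $i$, all leaders must share the same natural frequency, which I denote $\omega_{\rm L}$, and moreover the synchronization frequency is pinned to this leader value, $\omega_{\rm sync}=\omega_{\rm L}$.

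For part (ii), I would rewrite \eqref{eq:main} in its order-parameter form \eqref{eq:kuramoto_order}, namely $\dot{\theta}_i=\omega_i-r_i\sin(\theta_i-\psi_i)$, which is valid for every vertex. At a synchronous state $\dot{\theta}_i=\omega_{\rm sync}=\omega_{\rm L}$, so rearranging gives $\omega_i-\omega_{\rm L}=r_i\sin(\theta_i-\psi_i)$. Taking absolute values and bounding $|\sin(\cdot)|\le 1$ yields $|\omega_i-\omega_{\rm L}|\le r_i$, and the upper bound $r_i\le d_i^{\rm out}$ supplied by Proposition~\ref{prop:order_int} completes the chain $|\omega_i-\omega_{\rm L}|\le r_i\le d_i^{\rm out}$.

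The argument is essentially direct, so I do not anticipate a serious obstacle; the only point requiring care is the logical order, since one must first establish via the leaders that the common synchronization frequency equals $\omega_{\rm L}$ before this value can be substituted into the order-parameter equation for a general vertex. The hypothesis that at least one leader exists is exactly what guarantees that this pinning of the synchronization frequency is available.
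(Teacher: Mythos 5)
Your proof is correct and follows essentially the same route as the paper: part (i) uses the fact that leaders obey $\dot{\theta}_i=\omega_i$, and part (ii) combines the order-parameter form \eqref{eq:kuramoto_order} with the bound $r_i\leq d_i^{\rm out}$ from Proposition~\ref{prop:order_int}. The only difference is presentational --- the paper phrases (ii) as a contraposition while you argue directly, which is logically equivalent (and arguably cleaner).
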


\begin{proof}
 {\it (i)} By definition, at a synchronous state, all oscillators, and leaders in particular, have the same frequency $\dot{\theta}_i=\omega_{\rm L}$, $i\in\{1,...,n\}$. 
 By definition again, the dynamics of leaders, given by~\eqref{eq:main} reduce to
 \begin{align}\label{eq:dyn_leaders}
  \dot{\theta}_i &= \omega_i\, , & i\in\ELL\, .
 \end{align}
 Therefore the leaders must have the same natural frequency, $\omega_i=\omega_{\rm L}$, $i\in\ELL$. 
 
 {\it (ii)} If $|\omega_i - \omega_{\rm L}|>d_i^{\rm out}$, according to Proposition~\ref{prop:order_int}, 
 the right-hand-side of~\eqref{eq:kuramoto_order} is strictly larger than $\omega_{\rm L}$ in absolute value and the system is then not synchronous. 
 This concludes the proof by contraposition. 
\end{proof}

Without loss of generality, we can consider all angles in a frame rotating at a given frequency. 
Throughout this manuscript, if a network contains leaders and they have identical natural frequency $\omega_{\rm L}$, we apply the change of variable
\begin{align}
 \theta_i(t) &\to \theta_i(t)-\omega_{\rm L}t\, .
\end{align}
After this, leaders have zero natural frequency and a synchronous state at the leader frequency is an equilibrium of~\eqref{eq:main}, meaning that all frequencies are zero. 
We will see that other synchronous states can occur in cyclic interaction graphs.

If for all initial conditions, the system~\eqref{eq:main} converges to a synchronous state, we say that it \emph{globally synchronizes}. 
We say that a synchronous state is \emph{almost globally stable} if almost all initial conditions converge to it. 
If furthermore convergence is exponential, the synchronous state is \emph{almost globally exponentially stable}. 

\section{Oriented acyclic networks}\label{sec:acyclic}
An acyclic oriented graph contains at least one leader. 
Proposition~\ref{prop:nec} gives then necessary conditions for the existence of a synchronous state of~\eqref{eq:main}, and in particular for global synchronization, on such a network. 
We give now a sufficient condition for almost global synchronization in such networks, and prove that there is a unique stable synchronous state. 
Finally, we show that the necessary and sufficient conditions we obtain cannot be improved without further assumptions on the graphs considered. 

\subsection{Equilibria}
Let $\bm{\theta}^*\in\mathbb{T}^n$ be an equilibrium of~\eqref{eq:main}, then it solves 
\begin{align}
 0 &= \omega_i - \sum_{j=1}^na_{ij}\sin(\theta_i^*-\theta_j^*) = \omega_i - r_i^*\sin(\theta_i^*-\psi_i^*)\, , & i &\in\{1,...,n\}\, ,
\end{align}
where $r_i^*$, $\psi_i^*$ are the components of the local order parameter with angles values of $\bm{\theta}^*$. 
Such solutions exist if and only if $\omega_i\leq r_i^*$, for all $i$. 
If $r_i^*>0$, there are then two possible values for $\theta_i^*$ (which coincide if $\omega_i=r_i^*$)  
\begin{align}
 \theta_i^* &\in \{\arcsin(\omega_i/r_i^*)+\psi_i^*, \pi - \arcsin(\omega_i/r_i^*)+\psi_i^*\}\, .
\end{align}
Throughout this manuscript, we take the arcsine to be one-to-one from $[-1,1]$ to $[-\pi/2,\pi/2]$, thus at least one equilibrium satisfies $|\theta_i^*-\psi_i^*|\leq\pi/2$. 

We assess the local stability of $\bm{\theta}^*$ by linearizing the time evolution of a small perturbation of the angles. 
In the dynamical system~\eqref{eq:main} on an acyclic oriented graph, fixed points depend on the initial conditions of the leaders. 
Perturbing the leaders then changes the fixed point, and the analysis is not relevant for the fixed point initially considered. 
To avoid this, we retrict our stability analysis to the dynamics where the leaders are fixed, 
\begin{align}\label{eq:main_restr}
 \dot{\theta}_i &= \omega_i - \sum_{j=1}^na_{ij}\sin(\theta_i-\theta_j)\, , & i &\in\V\setminus\ELL\, , 
\end{align}
with $\theta_i=\theta_i^*$, for $i\in\ELL$. 

\begin{lemma}\label{lem:stab}
 Let $\bm{\theta}^*\in\mathbb{T}^n$ be an equilibrium of~(\ref{eq:main}), where the interaction graph is oriented, acyclic, weighted and signed. 
 The following statements hold: 
 
 (i) If for all $i\in\V\setminus\ELL$, we have $r_i^*>0$ and $|\omega_i|<r_i^*$, 
 then $\bm{\theta}^*$ is locally exponentially stable if and only if $\theta_i^*=\arcsin(\omega_i/r_i^*)+\psi_i^*$, for all $i\in\V\setminus\ELL$. 
 Furthermore, $\bm{\theta}^*$ is the only stable equilibrium with initial conditions of the leaders given by $\theta_i(0)=\theta_i^*$ for $i\in\ELL$. 
 
 (ii) If for at least one $i\in\V\setminus\ELL$ we have $r_i^*>0$, $|\omega_i|<r_i^*$, and $\theta_i^*=\pi-\arcsin(\omega_i/r_i^*)+\psi_i^*$, 
 then $\bm{\theta}^*$ is unstable and its stable manifold of $\bm{\theta}^*$ has zero measure. 
\end{lemma}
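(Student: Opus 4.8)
The plan is to linearise the restricted dynamics~\eqref{eq:main_restr} at $\bm{\theta}^*$ and to read off the spectrum of the Jacobian directly, using acyclicity. First I would differentiate~\eqref{eq:main} with respect to the angles of the non-leader vertices. Splitting the cosine and invoking the real/imaginary decomposition~\eqref{eq:local_order_param_var} gives $\sum_j a_{ij}\cos(\theta_i-\theta_j)=r_i\cos(\theta_i-\psi_i)$, so the diagonal entries of the Jacobian are $J_{ii}=-r_i^*\cos(\theta_i^*-\psi_i^*)$, while the off-diagonal entries $J_{ij}=a_{ij}\cos(\theta_i^*-\theta_j^*)$ vanish unless $j$ is a child of $i$, i.e.\ unless $(i,j)\in\E$.

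The key structural observation is that, since ${\cal G}$ is acyclic, its vertices admit a topological ordering in which every edge $(i,j)$ satisfies $i<j$; the leaders, being sinks, occupy the last positions and are removed when we restrict to $\V\setminus\ELL$. In this ordering the off-diagonal part of the restricted Jacobian is strictly upper triangular, so the Jacobian is triangular and its eigenvalues are exactly the diagonal entries $-r_i^*\cos(\theta_i^*-\psi_i^*)$, $i\in\V\setminus\ELL$. It then suffices to track signs. On the principal branch $\theta_i^*=\arcsin(\omega_i/r_i^*)+\psi_i^*$ one has $\theta_i^*-\psi_i^*\in(-\pi/2,\pi/2)$ strictly, because $|\omega_i|<r_i^*$, so the cosine is positive and the eigenvalue is negative; on the reflected branch $\theta_i^*=\pi-\arcsin(\omega_i/r_i^*)+\psi_i^*$ the cosine is negative and the eigenvalue is positive. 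Part~(i) follows at once: under its hypotheses all diagonal entries are nonzero, so $\bm{\theta}^*$ is hyperbolic, and it is exponentially stable if and only if every vertex lies on the principal branch. For the uniqueness statement in~(i), I would induct along the reverse topological order: with the leader angles fixed, any non-leader $i$ whose children are already pinned down has $r_i^*,\psi_i^*$ determined by~\eqref{eq:local_order_param}, and stability forces the principal branch, fixing $\theta_i^*$ uniquely; propagating from the sinks upward shows that at most one equilibrium with the given leader angles is stable, and it coincides with $\bm{\theta}^*$.

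For part~(ii) the same spectral computation shows that a single reflected vertex satisfying $r_i^*>0$ and $|\omega_i|<r_i^*$ already contributes a strictly positive eigenvalue, so the unstable subspace has dimension at least one and $\bm{\theta}^*$ is unstable. To control the stable set I would invoke the center-stable manifold theorem: every trajectory converging to $\bm{\theta}^*$ must lie on a local center-stable manifold of codimension at least one, hence of zero Lebesgue measure, and the full stable set is a countable union of diffeomorphic images of this manifold, so it is still null.

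I expect the genuine difficulty to be precisely this last point. In part~(ii) the hypothesis controls only one vertex, so the remaining diagonal entries may vanish and $\bm{\theta}^*$ need not be hyperbolic; the plain stable manifold theorem therefore does not apply, and one must pass through the center-stable manifold. What rescues the argument is the triangular structure established above, which guarantees both that the offending eigenvalue is genuinely positive (independently of the other, possibly zero, eigenvalues) and that the unstable dimension is at least one, giving the required codimension bound.
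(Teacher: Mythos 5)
Your proposal is correct and follows essentially the same route as the paper: acyclicity makes the Jacobian triangular (up to reordering), its diagonal entries $-r_i^*\cos(\theta_i^*-\psi_i^*)$ are the eigenvalues, the sign analysis of the two branches gives part (i), recursion from the fixed leaders gives uniqueness, and a strictly positive eigenvalue together with a measure-zero stable set gives part (ii). Your explicit appeal to the center-stable manifold theorem to handle possible non-hyperbolicity in (ii) is precisely the content the paper delegates to its citation of~\cite[Proposition 4.1]{Pot09}, so it is a welcome clarification rather than a departure.
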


\begin{proof}
 {\it (i)} In acyclic oriented networks, the Jacobian matrix ${\cal J}$ of the system~\eqref{eq:main} is lower triangular (up to renumbering the vertices). 
 Its eigenvalues are its diagonal elements 
 \begin{align}
  \J_{ii} &= -\sum_ja_{ij}\cos(\theta_i^*-\theta_j^*) = -r_i^*\cos(\theta_i^*-\psi_i^*)\, .
 \end{align}
 All eigenvalues are then negative if and only if $\theta_i^*=\arcsin(\omega_i/r_i^*)+\psi_i^*$ for all $i\in\V\setminus\ELL$. 
 The equilibrium then satisfies $|\theta_i^*-\psi_i^*|\leq\pi/2$ for $i\in\V\setminus\ELL$. 
 
 Moreover, as the initial conditions of the leaders are fixed, we can construct recursively the angles of the other oscillators. 
 This uniquely defines the stable equilibrium.
 
 {\it (ii)} If for some $i\in\V\setminus\ELL$, we have $\theta_i^*=\pi-\arcsin(\omega_i/r_i^*)+\psi_i^*$, then $\J_{ii}>0$.
 The fixed point $\bm{\theta}^*$ has at least one unstable direction, and its stable manifold is a submanifold of $\mathbb{T}^n$ of dimension at most $n-1$. 
 It has then zero measure~\cite[Proposition 4.1]{Pot09}. 
\end{proof}

\subsection{Sufficient condition}
We showed in Proposition~\ref{prop:nec} that $d_i^{\rm out}>|\omega_i|$ is a necessary condition for synchronization. 
This condition depends only on parameters of the system and is independent of its current state. 
In Lemma~\ref{lem:stab}, we showed that it is sufficient that each nonleader oscillator satisfies $r_i^*>|\omega_i^*|$ to guarantee the existence of a stable equilibrium. 
As the local order parameter depends on initial conditions, this condition is state-dependent. 
A condition that would guarantee synchronization for any initial conditions $\bm{\theta}^\circ\in\mathbb{T}^n$ would be 
\begin{align}\label{eq:real_suff}
 |\omega_i| &< \min_{\bm{\theta}^\circ}r_i^*\, ,
\end{align}
but this last quantity is hard to compute in general. 
However, according to Proposition~\ref{prop:order_int}, we know that $r_i^*\geq\rho_i$ regardless of initial conditions. 
As $\rho_i$ can be computed directly by~\eqref{eq:row}, we give now a state-independent sufficient condition for almost global stability of the unique stable equilibrium. 
Our argument relies on a series of results from~\cite{Mis95}, which we summarize in Appendix~\ref{ap:mischaikow} for completeness. 

\begin{theorem}\label{thm:acyc_suf}
 Let us consider the dynamical system~(\ref{eq:main}), where the interaction graph is oriented, acyclic, weighted, and signed, and assume that 
 for $i\in\{1,...,n\}$, 
 \begin{align}\label{eq:lbound}
  |\omega_i| &< \rho_i\, , ~ \text{if } \rho_i > 0\, , && \text{or} & \omega_i &= 0\, , ~ \text{if } \rho_i = 0\, .
 \end{align}
 Then the system synchronizes exponentially from almost all initial conditions. 
\end{theorem}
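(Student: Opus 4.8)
The plan is to exploit the acyclic structure to decouple the full system into a \emph{cascade} of scalar equations, and then to propagate convergence from the leaders upward by induction along a topological ordering of $\V$. Since the graph is acyclic, I would first fix an ordering of the vertices in which every child precedes its parents, so that the dynamics~\eqref{eq:kuramoto_order} of each oscillator $i$ depend, through $r_i$ and $\psi_i$, only on vertices appearing earlier; this is precisely the ordering making the Jacobian lower triangular in Lemma~\ref{lem:stab}. The leaders form the base of the induction: a leader has no children, so $d_i^{\rm out}=\rho_i=0$, the hypothesis~\eqref{eq:lbound} forces $\omega_i=0$, and in the rotating frame its dynamics reduce to $\dot\theta_i=0$, so that $\theta_i(t)\equiv\theta_i^\circ$ is already constant.

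For the inductive step I would assume that every child of a non-leader vertex $i$ has converged exponentially to a fixed angle. Then the local order parameter $r_i(t)e^{i\psi_i(t)}=\sum_k a_{ik}e^{i\theta_k(t)}$ converges exponentially to a constant $r_i^*e^{i\psi_i^*}$, and the equation for $\theta_i$ becomes an \emph{asymptotically autonomous} scalar equation on $\mathbb{S}^1$ with limit equation $\dot\theta=\omega_i-r_i^*\sin(\theta-\psi_i^*)$. The crucial input is Proposition~\ref{prop:order_int}, which gives $r_i(t)\geq\rho_i$ for \emph{every} configuration of the children: hence when $\rho_i>0$ the hypothesis yields $|\omega_i|<\rho_i\leq r_i^*$ unconditionally, and the limit equation has exactly two equilibria, the stable one at $\arcsin(\omega_i/r_i^*)+\psi_i^*$ and the unstable one at $\pi-\arcsin(\omega_i/r_i^*)+\psi_i^*$, exactly matching the local analysis of Lemma~\ref{lem:stab}.

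To transfer this phase portrait from the limit equation to the genuine nonautonomous trajectory, I would invoke the theory of asymptotically autonomous semiflows summarized in Appendix~\ref{ap:mischaikow} (following~\cite{Mis95}): since $\mathbb{S}^1$ is compact the orbit is precompact, so its $\omega$-limit set lies in the chain-recurrent set of the limit flow, which here is just the two equilibria. All initial conditions except those on the stable manifold of the unstable equilibrium, a single point and hence a null set in $\theta_i^\circ$, converge to the stable equilibrium, and the strictly negative diagonal entry $\J_{ii}=-r_i^*\cos(\theta_i^*-\psi_i^*)<0$ from Lemma~\ref{lem:stab} upgrades this to exponential convergence once the children have settled. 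The degenerate case $\rho_i=0$ with $\omega_i=0$ requires extra care, since there $r_i^*$ need not be bounded away from $0$; but $r_i^*=0$ forces $\sum_k a_{ik}\cos\theta_k^*=\sum_k a_{ik}\sin\theta_k^*=0$, a positive-codimension condition on the already-converged children, so it is avoided for almost all admissible limits, and whenever $r_i^*>0$ the argument above applies with $\omega_i=0$, giving convergence to $\theta_i=\psi_i^*$.

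Finally I would assemble the global statement. Running the induction through all $n$ vertices, at each stage I discard a set of exceptional initial conditions that is null in the fibre over the earlier variables; since $\V$ is finite, Fubini's theorem makes the union of these exceptional sets a null subset of $\mathbb{T}^n$. Off this null set every oscillator converges exponentially to a fixed angle in the rotating frame, which is precisely almost global exponential frequency-synchronization, as claimed in Theorem~\ref{thm:acyc_suf}. I expect the main obstacle to be the rigorous bookkeeping of the asymptotically autonomous reduction together with the measure-zero exclusions: one must ensure that the \emph{exponential} convergence of the children (not merely their convergence) is preserved up the cascade so that the limit equation is approached fast enough, and that the codimension argument ruling out $r_i^*=0$ in the $\rho_i=0$ case interacts correctly with Fubini across successive layers.
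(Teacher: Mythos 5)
Your proposal follows the paper's proof almost line for line in its main thrust: the same topological ordering, the same induction from the leaders upward, the same reduction of each oscillator to an asymptotically autonomous scalar equation via the local order parameter, the same invocation of the Mischaikow--Smith--Thieme results of Appendix~\ref{ap:mischaikow} to identify the $\omega$-limit set with an invariant chain-recurrent set of the limit equation, and the same use of Lemma~\ref{lem:stab} to discard the unstable equilibrium and obtain the exponential rate. The one place where you deviate from the paper is the degenerate case $\rho_i=0$, and there your argument has a genuine gap.

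You claim that $r_i^*=0$ imposes a positive-codimension condition on the children's limit angles and is therefore ``avoided for almost all admissible limits''. This is false, because the limit angles of the children are not generic points of a torus: they are deterministic functions of the (lower-dimensional set of) leader initial angles, and the hypothesis $\rho_i=0$ precisely allows configurations in which these limits cancel identically. Concrete counterexample: let $i$ have two children $j$ and $k$ with $a_{ij}=1$, $a_{ik}=-1$ (so $\rho_i=0$ and \eqref{eq:lbound} forces $\omega_i=0$), and let $j$ and $k$ each have a single common child, a leader $l$, with $a_{jl}=a_{kl}=1$ and $\omega_j=\omega_k=0$. For almost every initial condition, $j$ and $k$ both converge to $\theta_l^\circ$, hence $r_i^*=\left|e^{i\theta_l^\circ}-e^{i\theta_l^\circ}\right|=0$ for almost all initial conditions --- the exact opposite of your claim. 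Since the theorem explicitly covers $\rho_i=0$ with $\omega_i=0$, this case cannot be discarded as null; moreover, when $r_i^*=0$ the limit equation is $\dot{\varphi}_i=0$, its chain-recurrent set is all of $\mathbb{S}^1$, and the asymptotically autonomous machinery then gives no information. The paper closes this case by a direct estimate that bypasses the limit flow entirely: exponential convergence of the children gives $r_i(t)\leq ce^{-\mu t}$ for $t>T$, hence $|\dot{\theta}_i|\leq ce^{-\mu t}$ (using $\omega_i=0$), so $\theta_i$ converges to some limit $\theta_i^*$ with $|\theta_i(t)-\theta_i^*|\leq c\mu^{-1}e^{-\mu t}$. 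Patching your induction step with this integration argument in the $r_i^*=0$ case makes your proof essentially identical to the paper's.
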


\begin{remark}
 The condition in~\eqref{eq:lbound} is always satisfied for the leaders where $\omega_i=\rho_i=0$. 
\end{remark}

\begin{proof}
 We write the initial conditions of the system (at time $t=0$) as $\bm{\theta}^\circ\in\mathbb{T}^n$ and the equilibrium towards which it converges, if it exists, as $\bm{\theta}^*\in\mathbb{T}^n$. 
 If necessary, we renumber the oscillators, such that the adjacency matrix is lower triangular. 
 By recurrence on the oscillators' indices, we prove that all oscillators converge  exponentially to a steady state, i.e., 
 for all $i\in\{1,...,n\}$, $\dot{\theta}_i\to0$ and $\theta_i\to\theta^*_i$ as $t\to\infty$, both exponentially. 
 Note that all variable quantities in this proof depend on initial conditions, but we do not explicitly write this dependence for the sake of readability. 
  
 {\it Base case, $i=1$.}
 By definition, $i=1$ is a leader. 
 According to~\eqref{eq:main}, $\dot{\theta}_1\equiv0$, and it trivially exponentially converges to a steady state, 
 \begin{align}
  \lim_{t\to\infty}\theta_1&=\theta_1^*=\theta_1^\circ\, .
 \end{align}
 
 {\it Induction step.}
 Assume that for all $j\in\{1,...,i-1\}$, $\lim_{t\to\infty}\dot{\theta}_j=0$ and $\lim_{t\to\infty}\theta_j=\theta_j^*$, and convergences are exponential. 
 Let us prove that $i$ converges exponentially to a steady state as well, $\lim_{t\to\infty}\dot{\theta}_i=0$ and $\lim_{t\to\infty}\theta_i=\theta_i^*$. 
 As the adjacency matrix is lower triangular, we can restrict our discussion to the subgraph with vertices $\{1,...,i\}$. 
 
 If $i$ is a leader, the step is done by applying the base case. 
 Assume then that $i$ is not a leader. 
 By assumption, all children $j$ of $i$ converge exponentially, as by renumbering we have $j<i$. 
 Using the local order parameter, we can write~\eqref{eq:main} as the nonautonomous system 
 \begin{align}\label{eq:main_time}
  \dot{\theta}_i &= \omega_i - r_i(t)\sin\left[\theta_i-\psi_i(t)\right] \eqqcolon f(t,\theta_i)\, , 
 \end{align}
 where we write $r_i$ and $\psi_i$ as time-dependent variables because for a given initial state of the system $\bm{\theta}^\circ$, 
 from the point of view of the $i^{\rm th}$ oscillator, these values depend only on time. 
 Children are not influenced by parents, due to the acyclic topology of the graph. 
 The solution $\theta_i$ can be seen as a nonautonomous semiflow (Definition~\ref{def:mischaikow}) for \eqref{eq:main_time}. 
 As all children of $i$ converge to a fixed angle, there exists $r_i^*\in[\rho_i,d_i^{\rm out}]$ and $\psi_i^*\in\mathbb{S}^1$ such that
 \begin{align}\label{eq:convergence}
  \lim_{t\to\infty}r_i(t) &= r_i^*\, , & \lim_{t\to\infty}\psi_i(t) &= \psi_i^*\, .
 \end{align} 
 By definition of the local order parameter, exponential convergence of the children $j$ of $i$ implies that $r_i(t)$ and $\psi_i(t)$ in~\eqref{eq:convergence} converge exponentially. 
 We also define the autonomous system 
 \begin{align}\label{eq:main_limit}
  \dot{\varphi}_i &= \omega_i - r_i^*\sin\left[\varphi_i-\psi_i^*\right] \eqqcolon g(\varphi_i)\, ,
 \end{align} 
 with initial conditions $\bm{\varphi}(0)=\bm{\theta}^\circ$. 
 The solution $\varphi_i$ of \eqref{eq:main_limit} is an autonomous semiflow (Definition~\ref{def:mischaikow}). 
 Observe that~\eqref{eq:main_limit} equals the time-varying dynamics~\eqref{eq:main_time} only if $\psi_i(t)$ and $r_i(t)$ have converged to their steady-state values. 
 We show now that the dynamical systems~\eqref{eq:main_time} and \eqref{eq:main_limit} with the same initial conditions, always converge to the same set of fixed points. 
 
 The $\omega$\textit{-limit set} of $\theta_i$, subject to initial conditions $\bm{\theta}^\circ\in\mathbb{T}^n$, is defined as 
 \begin{align}
  \Lambda_i(\bm{\theta}^\circ) &\coloneqq \left\{\theta\in\mathbb{S}^1\colon\exists~\{t_j\},~j\in\mathbb{N},~{\rm s.t.}~\lim_{j\to\infty}t_j=
  \infty~{\rm and}~\lim_{j\to\infty}\theta_i(t_j)=\theta\right\}\, ,
 \end{align}
 which is the set towards which $\theta_i$ converges under the dynamics of \eqref{eq:main_time}, as $t\to\infty$. 
 
 The functions $f(t,\cdot)$ and $g(\cdot)$ are bounded, continuous and $2\pi$-periodic on $\mathbb{R}$. 
 They are then naturally defined on the compact quotient space $\mathbb{S}^1$, which we parametrize by angles in $(-\pi,\pi]$. 
 Furthermore, as $t\to\infty$, $f$ converges to $g$, and it is a standard result of analysis~\cite[Theorem 7.13]{Rud76} 
 that convergence of a function defined on a compact set is uniform.\footnote{Here \emph{uniform} means that $\forall~\varepsilon>0$, 
 $\exists~T>0$ such that $|f(t,\theta)-g(\theta)|<\varepsilon$, $\forall~t\geq T$ and $\forall~\theta\in\mathbb{S}^1$.} 
 In particular $f(t,\cdot) \to g(\cdot)$ uniformly on any compact subset of $\mathbb{S}^1$, as $t\to\infty$. 
 Proposition~\ref{prop:mischaikow} then implies that $\theta_i$ is asymptotically autonomous with limit semiflow $\varphi_i$ (Definition~\ref{def:mischaikow}).
 Theorem~\ref{thm:mischaikow} implies first that $\theta_i$ converges to $\Lambda_i$, second that $\Lambda_i$ is invariant for $\varphi_i$ (see Definition~\ref{def:mischaikow_inv}), 
 and third that $\Lambda_i$ is chain recurrent for $\varphi_i$ (see Definition~\ref{def:mischaikow_chain_rec}). 
 In the following, we construct the largest invariant and chain recurrent set for the dynamics of equation~\eqref{eq:main_limit}, thereby identifying $\Lambda_i$. 
 
 By Proposition~\ref{prop:order_int}, $r_i^*\geq 0$ and we consider separately the cases $r_i^*>0$ and $r_i^*=0$ . 
 If $r_i^*>0$, the largest invariant and chain recurrent set of the dynamics~\eqref{eq:main_limit} is composed of two points, 
 \begin{align}
  {\cal I}_i &\coloneqq \left\{\arcsin(\omega_i/r_i^*)+\psi_i^*,\pi-\arcsin(\omega_i/r_i^*)+\psi_i^*\right\}\, ,
 \end{align}
 which are well-defined and distinct as we assumed $0\leq|\omega_i|<r_i^*$. 
 This identifies $\Lambda_i$ as the set of equilibria of~\eqref{eq:main_limit}. 
 According to Lemma~\ref{lem:stab}, almost all initial conditions converge to an equilibrium where $\theta_i^*=\arcsin(\omega_i/r_i^*)+\psi_i^*$. 
 Furthermore, applying point (i) of Lemma~\ref{lem:stab} to the subnetwork composed of the vertices $\{1,...,i\}$, the equilibrium is locally exponentially stable, 
 implying exponential convergence of $\theta_i$ to $\theta_i^*$. 
 
 If $r_i^*=0$, then by assumption, $\omega_i=0$. 
 The right-hand-side of~\eqref{eq:main_limit} is then zero for all time, and the right-hand-side of~\eqref{eq:main_time} vanishes as $t\to\infty$.  
 The largest invariant chain recurrent set of the dynamics~\eqref{eq:main_limit} is the whole circle, that is $\Lambda_i=\mathbb{S}^1$. 
 According to the discussion below equation~\eqref{eq:convergence}, convergence of $r_i(t)$ to zero is exponential, i.e., there exists $T>0$ such that for all $t>T$, 
 \begin{align}
  0 &\leq r_i(t) \leq ce^{-\mu t}\, ,
 \end{align}
 with $c$, $\mu > 0$, which implies 
 \begin{align}
  |\dot{\theta}_i| &\leq ce^{-\mu t}\, .
 \end{align}
 We verify that $\theta_i$ converges, 
 \begin{align}
  \lim_{t\to\infty}|\theta_i| &\leq \lim_{t\to\infty} \left[\left|\theta_i(T)\right| + c\int_T^te^{-\mu s}ds \right] = \left|\theta_i(T)\right| + c\mu^{-1}e^{-\mu T} < \infty\, .
 \end{align}
 The limit then exists and we denote it by $\theta_i^*$. 
 Let us show that $\theta_i\to\theta_i^*$ exponentially. 
 For all $t>T$, 
 \begin{align}
  \left|\theta_i(t) - \theta_i^*\right| &\leq \int_t^\infty|\dot{\theta}_i(s)|ds \leq c\mu^{-1}e^{-\mu t}\, .
 \end{align}
 Convergence is then exponential. 
 
 We have shown that convergence is locally exponential for all equilibria $\bm{\theta}^*$ satisfying either $r_i^*>0,\, ~ |\theta_i^*-\psi_i^*| < \pi/2$ or $r_i^* = 0$, for $i\in\{1,...,n\}$. 
 According to Lemma~\ref{lem:stab}, the basin of attraction of all other equilibria have zero measure. 
 Almost all initial conditions then converge to such an equilibrium, and the rate of convergence is exponential since the linearized system converges exponentially~\cite[Theorem 4.15]{Kha02}. 
\end{proof}

If an oscillator $i$ has a single child $j$, then $\rho_i=d_i^{\rm out}=|a_{ij}|$ and, by Theorem~\ref{thm:acyc_suf} and Proposition~\ref{prop:nec}, 
equation~\eqref{eq:lbound} for vertex $i$ is a necessary and sufficient condition for almost global  {exponential} synchronization. 

\begin{corollary}\label{cor:1_out}
 If all vertices have at most one out-going edge, then the system synchronizes almost globally, exponentially fast, if and only if  the natural frequencies of all nonleader oscillators $i$, 
 satisfy $|\omega_i|<|a_{ij}|$, where $j$ is $i$'s unique child. 
\end{corollary}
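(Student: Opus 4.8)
The plan is to obtain the corollary by combining the sufficient condition of Theorem~\ref{thm:acyc_suf} with the necessary condition of Proposition~\ref{prop:nec}, exploiting the special structure of vertices with a single child. The key preliminary observation is that when a non-leader $i$ has exactly one out-going edge, to its unique child $j$, the sum in the definition~\eqref{eq:row} of the residual outgoing weight is empty, so that $\rho_i = |a_{ij}| = d_i^{\rm out}$; moreover, as noted in the Remark following Proposition~\ref{prop:order_int}, the local order parameter amplitude is then constant, $r_i \equiv |a_{ij}|$, independently of the children's angles. Note that each non-leader has at least one child by definition and at most one by hypothesis, hence exactly one.

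For sufficiency, I would assume $|\omega_i| < |a_{ij}|$ for every non-leader $i$. Since $a_{ij} \neq 0$ we have $\rho_i = |a_{ij}| > 0$, so the hypothesis reads $|\omega_i| < \rho_i$, which is precisely the first alternative in condition~\eqref{eq:lbound}; working in the rotating frame, the condition holds trivially for the leaders, where $\omega_i = \rho_i = 0$. Theorem~\ref{thm:acyc_suf} then applies verbatim and yields exponential synchronization from almost all initial conditions.

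For necessity, I would assume almost global exponential synchronization and deduce the strict inequality. Existence of a synchronous state together with Proposition~\ref{prop:nec}(ii), read in the rotating frame where $\omega_{\rm L} = 0$, gives $|\omega_i| \le d_i^{\rm out} = |a_{ij}|$ for each non-leader $i$, so it only remains to exclude equality. Here I would use that $r_i \equiv |a_{ij}|$ is constant: at the synchronous state the reduced scalar dynamics of oscillator $i$ is $\dot\theta_i = \omega_i - |a_{ij}|\sin(\theta_i - \psi_i^*)$, and if $|\omega_i| = |a_{ij}|$ then $\sin(\theta_i^* - \psi_i^*) = \pm 1$, so the two equilibria of Lemma~\ref{lem:stab} collapse into a single degenerate one with vanishing Jacobian entry $\J_{ii} = -r_i^*\cos(\theta_i^* - \psi_i^*) = 0$.

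The main obstacle is precisely this boundary case: a non-hyperbolic equilibrium at such a saddle-node configuration still attracts almost all initial data, so synchronization occurs, but only algebraically. I would make this quantitative by noting that near $\theta_i^* - \psi_i^* = \pm\pi/2$ the right-hand side behaves like $\mp\tfrac{1}{2}|a_{ij}|(\theta_i - \theta_i^*)^2$, whose flow converges like $1/t$ rather than exponentially; since $r_i$ is exactly constant and $\psi_i(t) \to \psi_i^*$ exponentially, this exponentially small forcing cannot restore an exponential rate. This contradicts the assumed exponential convergence, forcing $|\omega_i| < |a_{ij}|$ and completing the equivalence.
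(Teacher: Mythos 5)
Your first two paragraphs are exactly the paper's own proof: the paper observes that a single-child vertex has $\rho_i=d_i^{\rm out}=|a_{ij}|$, so the sufficient condition \eqref{eq:lbound} of Theorem~\ref{thm:acyc_suf} and the necessary condition \eqref{eq:ubound} of Proposition~\ref{prop:nec} pinch together, and the corollary follows. Where you genuinely diverge is the boundary case $|\omega_i|=|a_{ij}|$: the paper never excludes it analytically --- it only argues via the numerical Example~\ref{ex:c_ex} that \eqref{eq:lbound} cannot be relaxed to a nonstrict inequality for \emph{exponential} convergence --- whereas you attempt a rigorous exclusion through the saddle-node structure of the degenerate equilibrium. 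That is the right idea and, if completed, makes your proof stronger than the paper's.

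However, the step as you state it overclaims. The assertion that the exponentially decaying forcing coming from $\psi_i(t)\to\psi_i^*$ ``cannot restore an exponential rate'' is false as a per-trajectory statement. Writing $u=\theta_i-\psi_i(t)\mp\pi/2$, the reduced dynamics near the degenerate equilibrium is a forced Riccati equation $\dot u\approx \tfrac{|a_{ij}|}{2}u^2+f(t)$ with $f$ exponentially small; the substitution $u=-\tfrac{2}{|a_{ij}|}\,\dot w/w$ gives $\ddot w=q(t)w$ with $q$ integrable, whose solutions satisfy $w\sim\alpha+\beta t$ as $t\to\infty$. Solutions with $\beta\neq0$ converge algebraically, $u\sim-2/(|a_{ij}|\,t)$, but the exceptional solutions with $\beta=0$ \emph{do} converge exponentially, at the rate of the forcing (e.g., $u(t)=-Ae^{-\nu t}$ solves the quadratic model exactly for a suitable exponentially decaying $f$). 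So what is true, and what you actually need, is that exponential convergence fails for \emph{almost all} initial conditions of oscillator $i$ (the exceptional set is a codimension-one family), which, by Fubini over the downstream initial conditions, negates almost global exponential synchronization as defined in the paper. With that qualifier and the Riccati (or a comparison) argument supplied, your necessity direction is complete; without it, the boundary case remains an assertion --- though, to be fair, one the paper itself only supports numerically.
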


The case of identical natural frequencies ($\omega_i\equiv0$) is a particular case of Theorem \ref{thm:acyc_suf}.

\begin{corollary}\label{cor:id_freq}
 Consider the dynamical system~(\ref{eq:main}) with identical natural frequencies ($\omega_i\equiv0$) 
 where the interaction graph is oriented, acyclic, weighted, and signed. 
 Then the system synchronizes almost globally, exponentially fast. 
\end{corollary}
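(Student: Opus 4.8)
The plan is to verify that the hypothesis $\omega_i \equiv 0$ forces the sufficient condition \eqref{eq:lbound} of Theorem~\ref{thm:acyc_suf} to hold at every vertex, after which the conclusion is immediate. Recall that by Proposition~\ref{prop:order_int} the residual outgoing weight satisfies $\rho_i \geq 0$ for all $i$ (indeed $\rho_i$ is defined in \eqref{eq:row} as a maximum with $0$), so the two branches of \eqref{eq:lbound} together exhaust all possibilities.

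First I would fix an arbitrary nonleader vertex $i$ and distinguish the two cases appearing in \eqref{eq:lbound}. If $\rho_i > 0$, then since $\omega_i = 0$ we have $|\omega_i| = 0 < \rho_i$, so the first branch is satisfied. If instead $\rho_i = 0$, then the requirement reduces to $\omega_i = 0$, which holds by assumption. In either case vertex $i$ meets condition \eqref{eq:lbound}; for the leaders the condition holds automatically, as already noted in the remark following Theorem~\ref{thm:acyc_suf}. Since $i$ was arbitrary, the hypothesis of Theorem~\ref{thm:acyc_suf} is satisfied throughout the network.

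Finally I would invoke Theorem~\ref{thm:acyc_suf} directly to conclude that the system synchronizes exponentially from almost all initial conditions. There is essentially no obstacle here: the corollary is a clean specialization, and the only point requiring (minimal) care is that the disjunctive form of \eqref{eq:lbound} must be checked on both of its branches, each of which becomes trivial once $\omega_i = 0$.
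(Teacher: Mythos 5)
Your proposal is correct and is exactly the paper's argument: the paper presents Corollary~\ref{cor:id_freq} as an immediate specialization of Theorem~\ref{thm:acyc_suf}, since $\omega_i\equiv 0$ trivially satisfies both branches of condition~\eqref{eq:lbound}. Your explicit case check on $\rho_i>0$ versus $\rho_i=0$ is just a careful spelling-out of that same observation.
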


\subsection{Tightness of the conditions}\label{sec:tight}
We now give examples of three simple acyclic graphs to illustrate how tight our necessary and sufficient conditions~\eqref{eq:ubound} and \eqref{eq:lbound} are. 
The amplitude of the local order parameter $r_i^*$ depends only on the initial conditions of the leaders $\bm{\theta}^\circ$, and we know that 
\begin{align}\label{eq:min_r}
 \min_{\bm{\theta}^\circ}r_i^* &\geq \rho_i\, .
\end{align}
Depending on the graph topology, the inequality~\eqref{eq:min_r} can be tight or not. 
In the network considered in Example~\ref{ex:tight}, there is equality in~\eqref{eq:min_r}, whereas in Example~\ref{ex:ntight} it is a strict inequality. 

\begin{figure}
 \centering
 \includegraphics[width=.95\textwidth]{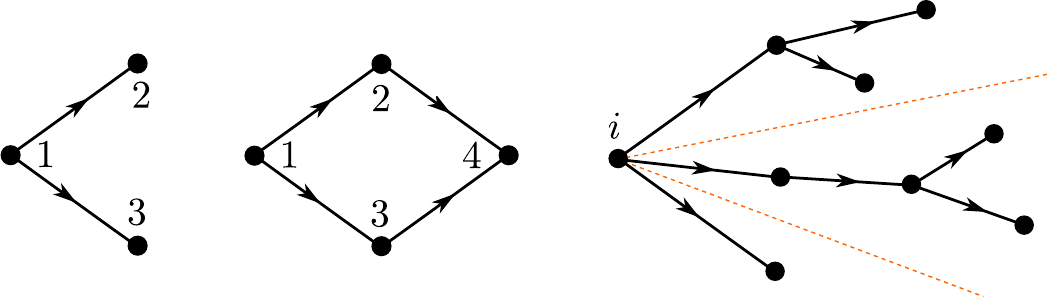}
 \caption{\it Left: Example of an acyclic network where, depending on initial conditions, $r_1^*$ can take any value in the interval $[\rho_i,d_i^{\rm out}]$. 
  Network parameters are $\omega_1=\omega_2=\omega_3=0$ and $(a_{12},a_{13}) = (1,2)$. 
  This gives $\rho_1=1$ and $d_1^{\rm out}=3$. 
  Center: Example of an acyclic network where there exists $\delta>0$ such that $\rho_1+\delta<r_1^*<d_1^{\rm out}-\delta$, for any initial conditions. 
  Network parameters are $(\omega_1,\omega_2,\omega_3,\omega_4)=(0,\sqrt{3},-2,0)$ and $(a_{12},a_{13},a_{24},a_{34})=(1,\sqrt{3},2,4)$. 
  This gives $\rho_1=\sqrt{3}-1$ and $d_1^{\rm out}=\sqrt{3}+1$, while $r_1^*\in\{1,2\}$. 
  Right: Example of an oriented graph where all edges going out of vertex $i$ belong to independent subgraphs, separated by the orange dashed lines. 
  In this case, for any $r\in[\rho_i,d_i^{\rm out}]$, there exists initial conditions such that $r_i^*=r$. }
 \label{fig:example}
\end{figure}

\begin{example}\label{ex:tight}
 Consider the weighted network in the left panel of Figure~\ref{fig:example} (parameters are given in the caption). 
 In this case, $\rho_1=1$ and $d_1^{\rm out}=3$. 
 Consider initial conditions such that $\theta_2^\circ-\theta_3^\circ=\phi$.
 For $\phi=0$, we obtained $r_1^*=d_1^{\rm out}=3$ and for $\phi=\pi$ we have $r_1^*=\rho_1$. 
 Both bounds~\eqref{eq:lbound} and \eqref{eq:ubound} are reached for some initial condition and while $\phi$ is ranging from $0$ to $\pi$, 
 all values of the interval $[\rho_1,d_1^{\rm out}]$ are obtained. 
\end{example}

This example satisfies the hypotheses of Corollaries~\ref{cor:1_out} and \ref{cor:id_freq}.
Both bounds are tight and valid depending on initial conditions, i.e., our bounds are as good as they get. 
For some networks, the sufficient condition of Theorem~\ref{thm:acyc_suf} is as well necessary for global synchronization. 
It also means that for these networks, equation~\eqref{eq:ubound} is a sufficient condition for the existence of an equilibrium. 

\begin{example}\label{ex:ntight}
 Consider the weighted network in the center panel of Figure~\ref{fig:example} (parameters are given in the caption). 
 Here $\rho_1=\sqrt{3}-1$ and $d_1^{\rm out}=\sqrt{3}+1$. 
 According to Theorem~\ref{thm:acyc_suf}, the system almost globally exponentially synchronizes for $|\omega_1|<\rho_1$. 
 We can compute that for all initial conditions, either $r_1^*=1$ or $r_1^*=2$. 
 Then, the system also globally synchronizes for $\rho_1 < |\omega_1| < 1$. 
\end{example}

In this example, the bounds are not sharp: synchronization is almost global for some natural frequencies violating~\eqref{eq:lbound}; 
and impossible for some natural frequencies violating~\eqref{eq:ubound}. 
The system globally synchronizes for $\rho_1 < |\omega_1| \leq \min_{\bm{\theta}^\circ}r_1^*$. 

Generally, if each edge going out of $i$ is connected to a separate subgraph, as in the right panel of Figure~\ref{fig:example}, 
then for any $r\in[\rho_i,d_i^{\rm out}]$, there exist system states such that $r_i^*=r$. 
Otherwise, if two edges going out of $i$ are connected to subgraphs sharing an edge, the range of the local order parameter $r_i^*$ is strictly smaller than the interval $[\rho_i,d_i^{\rm out}]$. 

\begin{example}\label{ex:c_ex}
 Consider the network on the left panel of Figure~\ref{fig:c_ex}, where all coupling are identical. 
 We simulate~(\ref{eq:main}) on this network, with natural frequencies $\bm{\omega}$ given by 
 \begin{align}\label{eq:nat_freq}
  \omega_i &= \left\{
   \begin{array}{ll}
    \omega_0\, , & \text{if } 1\leq i \leq 7\, ,\\
    0\, , & \text{if } 8\leq i \leq 15\, ,
  \end{array}
  \right.
 \end{align}
 and initial conditions satisfying $\theta_i(0)=0$ for $i\in\{8,...,15\}=\ELL$. 
 The center panel of Figure~\ref{fig:c_ex} shows the time evolution of the angle error $\delta\theta_i=|\theta_i-\theta_i^*|$ for $i\in\V\setminus\ELL$, when $\omega_0=1.9<r_i^*$. 
 Convergence is exponential (the y-axis is in log-scale) as predicted by Theorem~\ref{thm:acyc_suf}.
 The right panel of Figure~\ref{fig:c_ex} shows the same curves (dashed line) together with the time evolution of $\delta\theta_i$ when $\omega_0=2=r_i^*$ (plain lines). 
 We clearly see that convergence is slower that exponential (here the axis are both in log-scale). 
 This shows that condition~\eqref{eq:lbound} cannot be relaxed to a nonstrict inequality to guarantee exponential convergence.
\end{example}

\begin{figure}
 \centering
 \includegraphics[width=.95\textwidth]{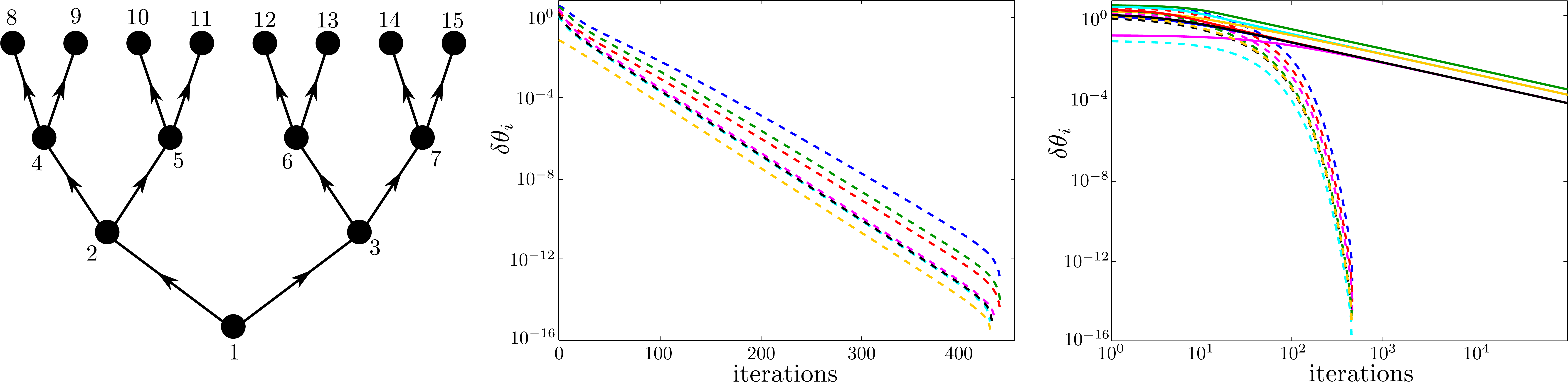}
 \caption{\it Left: Illustration of the network used in Example~\ref{ex:c_ex}. 
 Center: Time evolution of the angle errors $\delta\theta_i$ according to dynamics~(\ref{eq:main}) with natural frequencies given by equation~(\ref{eq:nat_freq}), with $\omega_0=1.9$. 
 Here $\omega_i<r_i^*$ for $i\in\V\setminus\ELL$ and we see exponential convergence (the y-axis is in log-scale). 
 Right: Same curves as in the center panel (dashed lines) together with the time evolution of the angle errors $\delta\theta_i$ according to dynamics~(\ref{eq:main}) 
 with natural frequencies given by equation~(\ref{eq:nat_freq}), with $\omega_0=2$ (plain lines). 
 Here $\omega_i=r_i^*$ for $i\in\V\setminus\ELL$ and we see that convergence is slower than exponential convergence (both axis are in log-scale).}
 \label{fig:c_ex}
\end{figure}

\begin{remark}
 We stress that all results obtained in this section apply to any magnitude and sign of the coupling strengths. 
 Our results can hardly be more general. 
\end{remark}

\section{Oriented cycles with identical frequencies}\label{sec:cycles}
Throughout section~\ref{sec:acyclic}, we completely characterized synchronization in oriented acyclic networks. 
We turn now to the simplest cyclic oriented graph, namely the oriented cycle. 
In this case, we are able to prove global synchronization for identical natural frequencies. 
Again, all the following results hold for any nonzero coupling constants. 

\begin{wrapfigure}{r}{5cm}
 \centering
 \includegraphics[width=.3\textwidth]{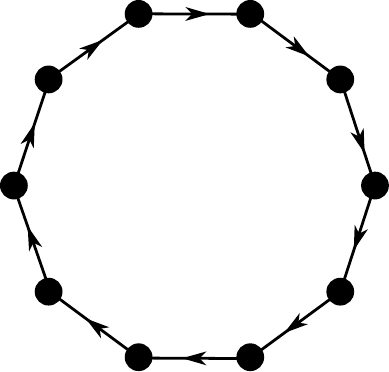}
 \caption{\it Oriented cycle of length $n=10$. 
 In section~\ref{sec:cycles}, we show global synchronization for these graphs.}
 \label{fig:dir_cycle}
\end{wrapfigure}
We consider oriented cycles of length $n$, as in Figure~\ref{fig:dir_cycle}, with arbitrary edge weights. 
By appropriately indexing all vertices, equation~\eqref{eq:main} is then
\begin{align}\label{eq:main_cycle}
 \dot{\theta}_i &= \omega_i - a_{i,i+1}\sin(\theta_i-\theta_{i+1})\, , & i &\in \{1,...,n\}\, ,
\end{align}
where indices are taken modulo $n$, and $a_{i,i+1}\in\mathbb{R}\setminus\{0\}$.

With identical natural frequencies, oriented cycles are very similar to undirected cycles. 
At a synchronous state, both cases admit the same angle differences (center panels of Figure~\ref{fig:dir_vs_undir}). 
The difference being that in the undirected cycle, the oscillators synchronize to the mean natural frequency~\cite[sect.~3.1]{Dor14}, 
which is zero (top right panel of Figure~\ref{fig:dir_vs_undir}). 
At a synchronous state, angles are then constant (top left panel of Figure~\ref{fig:dir_vs_undir}), i.e., a synchronous state is an equilibrium. 
In an oriented cycle, the \emph{synchronization frequency}, which is the frequency of all oscillators at a synchronous state and is denoted by $\omega_{\rm s}$, 
depends on the synchronous state (bottom right panel of Figure~\ref{fig:dir_vs_undir}). 
In this case, oscillators rotate at the same, nonzero, frequency (bottom left panel of Figure~\ref{fig:dir_vs_undir}). 
Stability of synchronous states is similar in oriented~\cite{Rog04} and undirected~\cite{Del16} cycles. 
\begin{figure}
 \centering
 \includegraphics[width=.95\textwidth]{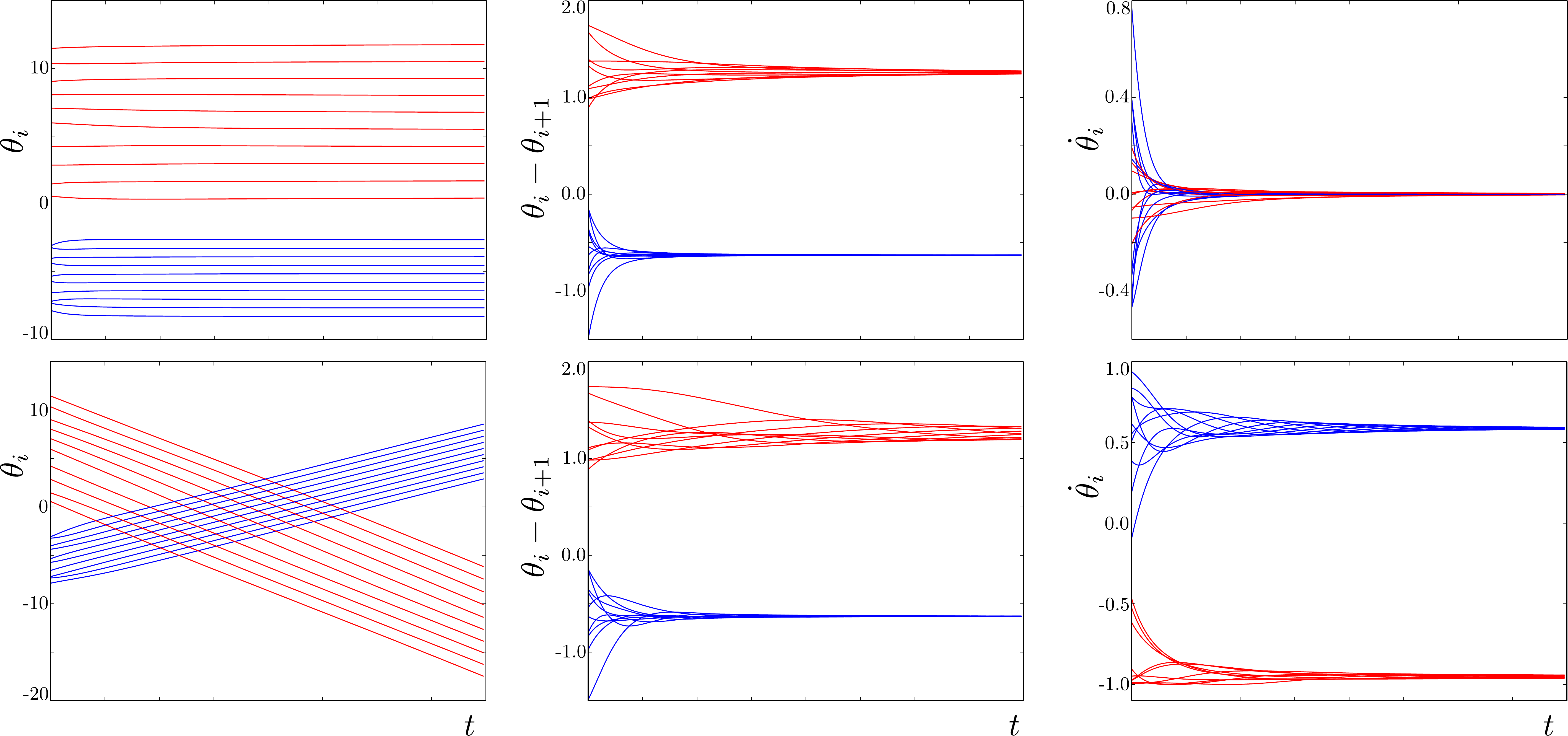}
 \caption{\it Time evolution of angles (left), angle differences (center) and frequencies (right) according to the Kuramoto dynamics (\ref{eq:main}) with undirected (top) 
 and oriented (bottom) cyclic interactions. 
 The cycle is of length $n=10$ with identical natural frequencies and identical couplings. 
 Blue and red lines correspond to two different initial conditions, leading to two different synchronous states. 
 In the undirected case, angles converge to a fixed value (top left panel), which is not the case in the oriented case (bottom left). 
 Angle differences converge to the same value in both undirected and oriented cycles (top and bottom center panels). 
 At the synchronous state, the frequencies are always zero in the undirected cycle (top right panel), and possibly nonzero and different in the oriented case (bottom right panel). }
 \label{fig:dir_vs_undir}
\end{figure}

We will use the following short-hand notations:
\begin{align}\label{eq:shorthand}
 s_i &\coloneqq \sin(\theta_i) & c_i &\coloneqq \cos(\theta_i) &
 s_{ij} &\coloneqq \sin(\theta_i-\theta_j) & c_{ij} &\coloneqq \cos(\theta_i-\theta_j)\, ,
\end{align}
and $\Delta_{ij}$ denotes the angle difference $\theta_i-\theta_j$ taken modulo $2\pi$ in the interval $(-\pi,\pi]$. 

\subsection{Global synchronization}
In the case of identical frequencies ($\omega_i\equiv 0$), we are able to show global synchronization for oriented cycles. 

\begin{theorem}\label{thm:cyc}
 Consider the dynamical system~(\ref{eq:main_cycle}) with identical natural frequencies. 
 This dynamical system globally converges to a synchronous state. 
\end{theorem}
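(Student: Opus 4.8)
The plan is to exhibit a Lyapunov function for the angle-difference dynamics and then conclude by LaSalle's invariance principle. Since the natural frequencies are identical, $\omega_i\equiv\omega$, I would first pass to the edge flows $f_i\coloneqq a_{i,i+1}\sin(\theta_i-\theta_{i+1})$, so that \eqref{eq:main_cycle} reads $\dot\theta_i=\omega-f_i$. Introducing the angle differences $u_i\coloneqq\theta_i-\theta_{i+1}$ (indices modulo $n$), the common frequency $\omega$ cancels in the difference dynamics, yielding the closed system $\dot u_i=f_{i+1}-f_i$. The point of this reduction is that a synchronous state is exactly a configuration in which all oscillators share one frequency, i.e. in which all flows coincide, $f_1=\dots=f_n$; equivalently, $\bm u$ is then an equilibrium of the difference dynamics on the compact torus obtained after quotienting out the global phase.

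Second, I would propose the candidate
\begin{align}
 V(\bm\theta) &\coloneqq -\sum_{i=1}^n a_{i,i+1}\cos(\theta_i-\theta_{i+1})\, ,
\end{align}
which is smooth, bounded, and depends on $\bm\theta$ only through the $u_i$, hence descends to the reduced torus. Differentiating along \eqref{eq:main_cycle} and substituting $\dot\theta_i-\dot\theta_{i+1}=f_{i+1}-f_i$ gives
\begin{align}
 \dot V &= \sum_{i=1}^n f_i\,(f_{i+1}-f_i) = -\tfrac12\sum_{i=1}^n (f_i-f_{i+1})^2 \le 0\, ,
\end{align}
where the rearrangement uses the cyclic identity $\sum_i f_{i+1}^2=\sum_i f_i^2$. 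The decisive feature is that equality holds if and only if $f_i=f_{i+1}$ for every $i$, i.e. precisely on the set of synchronous states. Note that this computation crucially exploits the identical-frequency hypothesis, through the cancellation of $\omega$ that makes $\dot u_i$ depend only on the flows; the argument would break for nonidentical frequencies.

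Third, I would invoke LaSalle's invariance principle on the compact reduced torus: every trajectory converges to the largest invariant subset of $\{\dot V=0\}$. Because $f_1=\dots=f_n$ forces $\dot u_i\equiv 0$, the set $\{\dot V=0\}$ consists exactly of the equilibria of the difference dynamics, so it is itself invariant and coincides with the set of synchronous states. LaSalle then yields that the trajectory approaches this set; in particular the frequencies $\dot\theta_i$ all converge to a common value and the angle differences converge to the synchronization set, which is the content of frequency synchronization.

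Finally, to upgrade convergence to the set into convergence to a single synchronous state, I would argue that these equilibria are isolated. A synchronous state requires $\sin u_i=F/a_{i,i+1}$ for one common flow $F$, subject to the discrete winding constraint $\sum_i u_i\in 2\pi\mathbb{Z}$; for nondegenerate weights this selects finitely many admissible pairs $(F,\bm u)$, so the equilibria are isolated, and the connected, invariant $\omega$-limit set delivered by LaSalle must reduce to a single point. I expect this last step to be the main obstacle: one must rule out (or separately treat) degenerate weight choices for which the flows coincide along a continuum, and thereby genuinely pin down the limiting state, whereas the Lyapunov inequality and the LaSalle conclusion are immediate once $V$ has been guessed.
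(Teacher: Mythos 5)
Your core argument is the same as the paper's: up to an additive constant and a factor of $2$, your candidate $V$ \emph{is} the LaSalle function \eqref{eq:lasalle} (on a cycle, $2\sum_{i,j}a_{ij}[1-\cos(\theta_i-\theta_j)]=2\sum_i a_{i,i+1}-2\sum_i a_{i,i+1}\cos(\theta_i-\theta_{i+1})$), your identity $\dot V=-\tfrac12\sum_i(f_i-f_{i+1})^2=-\tfrac12\sum_i(\dot\theta_i-\dot\theta_{i+1})^2$ is the paper's computation \eqref{eq:Vdot}, and both proofs conclude by LaSalle's invariance principle on the compact torus, where $\{\dot V=0\}$ is exactly the set of frequency-synchronous states. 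That is where the paper's proof stops: ``converges to a synchronous state'' is understood as frequency synchronization, i.e., convergence to this invariant set.

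Your fourth step, upgrading set convergence to convergence to a single equilibrium of the difference dynamics, goes beyond what the paper proves, and as written it relies on a claim that is false under the theorem's hypotheses: the synchronous states need not be isolated, and the problematic weight choices are not exotic. The paper's remark following Proposition~\ref{prop:bound_sync_freq} exhibits, for \emph{identical positive} weights on a cycle whose length is a multiple of $4$, a one-parameter continuum $\bm{\theta}^\alpha$ of (unstable) synchronous states: half of the differences equal $\alpha$, the other half equal $\pi-\alpha$, so every flow equals $K\sin\alpha$ while the winding constraint $\sum_i u_i=n\pi/2\in 2\pi\mathbb{Z}$ holds for \emph{every} $\alpha$. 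In your notation, on mixed-$\arcsin$-branch configurations with balanced weights the map $F\mapsto\sum_i u_i(F)$ is constant, so the ``discrete winding constraint selects finitely many $(F,\bm{u})$'' argument collapses precisely there. You correctly flagged this as the main obstacle; the resolution is that the theorem, as the paper reads it, does not require pinning down a single limit point --- LaSalle already delivers frequency synchronization --- and the paper's finiteness statement (Proposition~\ref{prop:freq}) is deliberately restricted to \emph{stable} synchronous states, where this degeneracy cannot occur.
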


\begin{proof}
 The function
 \begin{align}\label{eq:lasalle}
 \begin{split}
  V\colon \mathbb{T}^n &\longrightarrow \mathbb{R}\\
  \bm{\theta} &\longmapsto 2\cdot\sum_{i,j}a_{ij}\left[1-\cos(\theta_i-\theta_j)\right]\, ,
 \end{split}
 \end{align}
 is a LaSalle function~\cite{Las61} for the system under consideration.
 The function $V$ is bounded, and its time derivative is given by 
 \begin{align}\label{eq:Vdot}
  \dot{V} &= -2\cdot\sum_{i=1}^n \left(a_{i,i+1}^2s_{i,i+1}^2 - a_{i-1,i}a_{i,i+1}s_{i-1,1}s_{i,i+1} \right)\\
  &= -\sum_{i=1}^n \left(a_{i-1,i}s_{i-1,i} - a_{i,i+1}s_{i,i+1}\right)^2 = -\sum_{i=1}^n \left(\dot{\theta}_{i-1} - \dot{\theta}_i\right)^2 \leq 0\, .
 \end{align}
 The system being defined on a compact manifold $\mathbb{T}^n$, LaSalle's invariance principle~\cite{Las61} implies that it converges to a synchronous state, 
 because $V$ is lower bounded and decreasing, and its time derivative vanishes only at synchronous states. 
\end{proof}

The frequency of the oscillators at a synchronous state is bounded by the smallest coupling, 
\begin{align}
 a_{\min} &\coloneqq \min_i|a_{i,i+1}|\, .
\end{align}

\begin{proposition}\label{prop:bound_sync_freq}
 The synchronization frequency $\omega_{\rm s}$ belongs to the interval $[-a_{\min},a_{\min}]$. 
\end{proposition}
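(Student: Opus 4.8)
The plan is to use the defining property of a synchronous state together with the existence guaranteed by Theorem~\ref{thm:cyc}. At a synchronous state every oscillator rotates at the common synchronization frequency, so $\dot{\theta}_i=\omega_{\rm s}$ for all $i$. First I would evaluate the equations of motion~\eqref{eq:main_cycle} at such a state. In the identical-frequency regime $\omega_i\equiv 0$, each equation collapses to $\omega_{\rm s}=-a_{i,i+1}\sin(\Delta_{i,i+1})$, and the point to emphasize is that the very same quantity $\omega_{\rm s}$ appears on the left of all $n$ of these relations.

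The key step is then a one-line estimate: since $|\sin(\Delta_{i,i+1})|\le 1$, each relation gives $|\omega_{\rm s}|=|a_{i,i+1}|\,|\sin(\Delta_{i,i+1})|\le|a_{i,i+1}|$. As this holds simultaneously for every edge of the cycle, I would take the sharpest bound by minimizing the right-hand side over $i$, obtaining $|\omega_{\rm s}|\le\min_i|a_{i,i+1}|=a_{\min}$. This is exactly the asserted inclusion $\omega_{\rm s}\in[-a_{\min},a_{\min}]$.

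I do not expect any genuine analytic obstacle here; the difficulty is only to set the argument up correctly. The two things to verify are that $\omega_{\rm s}$ is well defined and shared by all oscillators, which is precisely the content of a synchronous state combined with Theorem~\ref{thm:cyc}, and that the single identity $\omega_{\rm s}=-a_{i,i+1}\sin(\Delta_{i,i+1})$ holds edge by edge, which is immediate from~\eqref{eq:main_cycle} with $\omega_i\equiv 0$. Notably the whole estimate is insensitive to the signs and magnitudes of the weights $a_{i,i+1}$, in line with the stated generality for arbitrary nonzero couplings.
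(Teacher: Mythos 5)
Your proof is correct and follows essentially the same route as the paper: both evaluate the synchronous-state identity $\omega_{\rm s}=-a_{i,i+1}\sin(\theta_i-\theta_{i+1})$ edge by edge and bound $|\omega_{\rm s}|\leq|a_{i,i+1}|$ for every $i$, which is exactly the paper's intersection $\bigcap_i[-|a_{i,i+1}|,|a_{i,i+1}|]=[-a_{\min},a_{\min}]$. No gap to report.
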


\begin{proof}
 The frequency $\omega_{\rm s}$ has to satisfy
 \begin{align}\label{eq:sync_state}
  \omega_{\rm s} &= -a_{i,i+1}s_{i,i+1}\, ,
 \end{align}
 for all $i\in\{1,...,n\}$, which implies 
 \begin{align}\label{eq:bounds}
  \omega_{\rm s} &\in \bigcap_i\left[-|a_{i,i+1}|,|a_{i,i+1}|\right]=[-a_{\min},a_{\min}]\, .
 \end{align}
\end{proof}

\begin{remark}
 In general the set of synchronization frequencies can be continuous. 
 We present here an example inspired from \cite[sect. VI]{Man17}. 
 Consider an oriented cycle whose length $n$ is a multiple of $4$, with identical positive edge weights. 
 Then the state $\bm{\theta}^{\alpha}\in\mathbb{T}^n$, defined by 
 \begin{align}
  \theta_i^{\alpha} &\coloneqq \left\{
  \begin{array}{ll}
   -(i-1)\pi/2\, , & \text{if }i\text{ is odd,} \\
   -(i-2)\pi/2 - \alpha\, , & \text{if }i\text{ is even,}
  \end{array}
  \right.
 \end{align}
 with $\alpha\in(0,\pi/2)$, has angle differences 
 \begin{align}
  \Delta_{i,i+1}^\alpha &= \left\{
  \begin{array}{ll}
   \alpha\, , & \text{if }i\text{ is odd,} \\
   \pi-\alpha\, , & \text{if }i\text{ is even.}
  \end{array}
  \right.
 \end{align}
 It is then an unstable synchronous state of~\eqref{eq:main_cycle}, regardless of the value of $\alpha$. 
 The continuum of values of $\alpha$ implies a continuum of synchronous states.
\end{remark}

Restricting ourselves to stable synchronous states, we show below that their number is discrete.

\subsection{Stability}\label{sec:stab_cyc}
A full stability analysis of the identical frequency case has been done in \cite{Rog04} for positive edge weights. 
We extend it to general nonzero edge weights. 

\begin{lemma}\label{lem:stability}
 A synchronous state $\bm{\theta}^*\in\mathbb{T}^n$ of (\ref{eq:main_cycle}) with identical natural frequencies is locally exponentially stable if and only if 
 \begin{align}\label{eq:stab_condi}
  a_{i,i+1}\cos(\theta_i^*-\theta_{i+1}^*) &> 0\, , & i &\in \{1,...,n\}\, .
 \end{align}
 Furthermore, at such a synchronous state with frequency $\omega_{\rm s}$, the angle differences are given by 
 \begin{align}\label{eq:angle_diffs}
 \Delta_{i,i+1}^{\omega_{\rm s}} &= \left\{
 \begin{array}{ll}
  -\arcsin(\omega_{\rm s}/a_{i,i+1})\, , & \text{if }a_{i,i+1} > 0\, , \\
  \pi + \arcsin(\omega_{\rm s}/a_{i,i+1})\, , & \text{if }a_{i,i+1} < 0\, ,
  \end{array}
 \right.
 & i &\in \{1,...,n\}\, . 
\end{align}
\end{lemma}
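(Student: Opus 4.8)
The plan is to read off local stability from the Jacobian of~\eqref{eq:main_cycle} at $\bm{\theta}^*$. Writing $b_i \coloneqq a_{i,i+1}\cos(\theta_i^*-\theta_{i+1}^*)$, the Jacobian $\J$ has diagonal entries $\J_{ii}=-b_i$, cyclic super-diagonal entries $\J_{i,i+1}=b_i$, and zeros elsewhere. Since~\eqref{eq:main_cycle} is invariant under a uniform shift of all angles, $\J\,(1,\dots,1)^\top=0$, so $\lambda=0$ is always an eigenvalue carrying the rotational mode; accordingly, ``locally exponentially stable'' is to be understood modulo this direction, i.e.\ for the induced dynamics on the angle differences. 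First I would compute the characteristic polynomial of this cyclic bidiagonal matrix, which collapses to the compact form $\prod_{i=1}^n(\lambda+b_i)=\prod_{i=1}^n b_i$; one checks directly that $\lambda=0$ solves it, consistent with the rotational mode.

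For the sufficiency (``if'') direction I would assume $b_i>0$ for all $i$ and rewrite the characteristic equation as $\prod_i(1+\lambda/b_i)=1$. For any $\lambda\neq0$ with $\operatorname{Re}\lambda\ge0$, each factor satisfies $|1+\lambda/b_i|>1$, because $\lambda/b_i=x+\mathrm{i}y$ with $x\ge0$ gives $|1+\lambda/b_i|^2=(1+x)^2+y^2\ge1$ with equality only at $\lambda=0$. The product then has modulus strictly greater than $1$ and cannot equal $1$, so every nonzero eigenvalue lies in the open left half-plane and $\bm{\theta}^*$ is exponentially stable modulo rotation. This is the clean core of the argument and reproduces exactly the sign condition~\eqref{eq:stab_condi}.

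The necessity (``only if'') direction is where I expect the real difficulty. Here one must show that if some $b_i\le0$ then the characteristic equation admits a root with nonnegative real part. The natural attempt is a root-location argument: study $P(\lambda)\coloneqq\prod_i(\lambda+b_i)-\prod_i b_i$ on the real axis, seeking a positive real root by an intermediate-value argument, or translate stability into positivity of the elementary symmetric functions of the $b_i$ via a Routh--Hurwitz analysis of $P(\lambda)/\lambda$. The obstacle is that with mixed signs the roots of $\prod_i(\lambda+b_i)=\prod_i b_i$ are sensitive and sign cancellations can occur, so a \emph{pointwise} sign condition is not transparently necessary; I expect this step to demand the most care and possibly additional structural input, e.g.\ the winding constraint $\sum_i\Delta_{i,i+1}\in 2\pi\mathbb{Z}$ together with the consistency relations~\eqref{eq:sync_state}.

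Finally, to obtain the explicit angle differences~\eqref{eq:angle_diffs} I would combine the stability condition with~\eqref{eq:sync_state}, which yields $\sin(\theta_i^*-\theta_{i+1}^*)=-\omega_{\rm s}/a_{i,i+1}$ and therefore pins down $\Delta_{i,i+1}^{\omega_{\rm s}}$ up to the two arcsine branches, which differ precisely by the sign of the cosine. The condition $a_{i,i+1}\cos\Delta_{i,i+1}>0$ then selects one branch: when $a_{i,i+1}>0$ the cosine must be positive, forcing $\Delta_{i,i+1}=-\arcsin(\omega_{\rm s}/a_{i,i+1})$; when $a_{i,i+1}<0$ the cosine must be negative, forcing $\Delta_{i,i+1}=\pi+\arcsin(\omega_{\rm s}/a_{i,i+1})$. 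A short check of the signs of $\sin$ and $\cos$ against the principal branch of $\arcsin$ confirms these coincide with~\eqref{eq:angle_diffs}.
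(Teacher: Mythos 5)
Your Jacobian, its characteristic equation $\prod_{i=1}^n(\lambda+b_i)=\prod_{i=1}^n b_i$ (with $b_i\coloneqq a_{i,i+1}\cos(\theta_i^*-\theta_{i+1}^*)$), and your sufficiency argument are all correct, and the sufficiency part is in fact sharper than the paper's: the paper handles both directions at once by invoking Gershgorin's circle theorem as an ``if and only if'', whereas your modulus estimate $|1+\lambda/b_i|>1$ cleanly shows that every nonzero eigenvalue has $\operatorname{Re}\lambda<0$ when all $b_i>0$. One small addition you still need there: $\lambda=0$ must be a \emph{simple} root for ``exponentially stable modulo rotation'' to follow from the linearization; this holds because the derivative of the characteristic polynomial at $0$ equals $\sum_i\prod_{j\neq i}b_j>0$. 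Your branch-selection derivation of \eqref{eq:angle_diffs} from \eqref{eq:sync_state} and \eqref{eq:stab_condi} coincides with the paper's.

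The genuine gap is the necessity direction, which you leave open --- but your instinct that a pointwise sign condition ``is not transparently necessary'' is exactly right, and the gap cannot be closed, because for signed weights the ``only if'' statement is false. Take $n=3$, $a_{12}=-\varepsilon$ with $0<\varepsilon<1/2$, $a_{23}=a_{31}=1$, $\omega_i\equiv0$, and the synchronous state $\theta_1^*=\theta_2^*=\theta_3^*$, so that $\omega_{\rm s}=0$. This state violates \eqref{eq:stab_condi} since $a_{12}\cos(\theta_1^*-\theta_2^*)=-\varepsilon<0$, and it violates \eqref{eq:angle_diffs}, which would force $\Delta_{12}=\pi$; yet in the coordinates $(u,v)=(\theta_1-\theta_2,\theta_2-\theta_3)$ the linearization at this state is
\begin{align*}
 \begin{pmatrix} \varepsilon & 1 \\ -1 & -2 \end{pmatrix},
\end{align*}
with trace $\varepsilon-2<0$ and determinant $1-2\varepsilon>0$, so the state is locally exponentially stable modulo rotation; equivalently, the full characteristic polynomial $\lambda\left[\lambda^2+(2-\varepsilon)\lambda+(1-2\varepsilon)\right]$ has a simple zero root and two roots in the open left half-plane. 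The same example exposes the flaw in the paper's own proof: Gershgorin only places the spectrum in the \emph{union} of the discs, so $b_i\geq0$ for all $i$ is sufficient for a nonpositive spectrum but never necessary --- here the disc lying in the right half-plane contains only the trivial eigenvalue $\lambda=0$, on its boundary. So your missing step is not an omission you could have repaired: a correct necessity proof requires additional structure, e.g.\ positive weights as in Rogge--Aeyels, where \eqref{eq:sync_state} forces all the $\sin(\theta_i^*-\theta_{i+1}^*)$ to share a sign; in the signed setting, the lemma's ``only if'' direction and the claim that stable states obey \eqref{eq:angle_diffs} fail as stated.
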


\begin{proof}
 A synchronous state is locally exponentially stable if and only if the Jacobian matrix 
 \begin{align}\label{eq:stab_matrix}
  \J(\bm{\theta}) &= 
  \begin{pmatrix}
   -a_{12}c_{12} & a_{12}c_{12} & & & \\
   & -a_{23}c_{23} & a_{23}c_{23} & & \\
   & & \ddots & \ddots & \\
   & & & \ddots & a_{n-1,n}c_{n-1,n} \\
   a_{n,1}c_{n,1} & & & & -a_{n,1}c_{n,1}
  \end{pmatrix}\, ,
 \end{align}
 is Hurwitz. 
 Due to invariance under rotation of all angles, at least one eigenvalue is zero, but this has no influence on the stability of the synchronous state. 
 According to Gershgorin's Circle Theorem~\cite{Hor86}, the eigenvalues of $\J$ are nonpositive if and only if 
 \begin{align}
  a_{i,i+1}\cos(\theta_i^*-\theta_{i+1}^*) &\geq 0\, , & i &\in \{1,...,n\}\, .
 \end{align}
 
 If $\cos(\theta_i^*-\theta_{i+1}^*)=0$ for some $i\in\{1,...,n\}$, then the first-order term in the Taylor series of $\dot{\theta}_i$ is zero. 
 As the second-order term is nonzero, according to the discussion in \cite[sect. 1.2]{Kha91}, the synchronous state is unstable. 
 We conclude that a synchronous state is locally asymptotically stable if and only if
 \begin{align}
  a_{i,i+1}\cos(\theta_i^*-\theta_{i+1}^*) &> 0\, , & i &\in \{1,...,n\}\, .
 \end{align}
 As in this case the Jacobian is Hurwitz, $\bm{\theta}^*$ is locally exponentially stable, which proves the first part of the proposition. 

 For a synchronous state to be stable, the cosine of the angle difference $c_{i,i+1}$ must have the same sign as its edge weight $a_{i,i+1}$. 
 We see that 
 \begin{align}
  \cos[-\arcsin(\omega_{\rm s}/a_{i,i+1})] &\geq 0\, , & \cos[\pi+\arcsin(\omega_{\rm s}/a_{i,i+1})] &\leq 0\, ,
 \end{align}
 where $\omega_{\rm s}$ is the synchronization frequency, and conclude that the angle differences in a stable synchronous state are given by \eqref{eq:angle_diffs}. 
 This concludes the proof. 
\end{proof}

\begin{remark}
 All synchronous states, not only the stable ones, can be parametrized similarly as in~\eqref{eq:angle_diffs}. 
 At a synchronous state, the angle differences satisfy
 \begin{align}
  \Delta_{i,i+1}^{\omega_{\rm s}}\in\left\{-\arcsin(\omega_{\rm s}/a_{i,i+1}),\pi+\arcsin(\omega_{\rm s}/a_{i,i+1})\right\}\, .
 \end{align}
\end{remark}

\subsection{Stable synchronization frequencies}
We now characterize the possible synchronization frequencies corresponding to stable synchronous states, which we call \textit{stable synchronization frequencies}. 
We define $n^-$ to be the number of edges with negative weight and 
\begin{align}
 \sigma &\coloneqq \frac{1}{2\pi}\sum_{i=1}^n\arcsin\left(a_{\min}/|a_{i,i+1}|\right)\, .
\end{align}

\begin{proposition}\label{prop:freq}
 Consider the dynamical system (\ref{eq:main_cycle}) with identical natural frequencies. 
 The number of possible stable synchronization frequencies is discrete and is given by the number of integers in the interval $(n^-/2-\sigma,n^-/2+\sigma)$. 
 In particular, for each stable synchronization frequency $\omega_{\rm s}$, there exists an integer $q\in(n^-/2-\sigma,n^-/2+\sigma)\cap\mathbb{Z}$ such that $\omega_{\rm s}$ solves 
 \begin{align}\label{eq:implicit_freq_stab}
  \sum_{i=1}^n\arcsin\left(\omega_{\rm s}/|a_{i,i+1}|\right) &= (n^--2q)\pi\, .
 \end{align}
\end{proposition}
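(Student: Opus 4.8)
The plan is to combine the parametrization of stable synchronous states from Lemma~\ref{lem:stability} with the closure (winding-number) condition around the cycle, and then to reduce the count to the monotonicity of a single scalar function of $\omega_{\rm s}$.

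First I would fix a stable synchronous state with synchronization frequency $\omega_{\rm s}$ and write its angle differences via~\eqref{eq:angle_diffs}. Splitting the edges by sign and using $a_{i,i+1}=-|a_{i,i+1}|$ together with the oddness of $\arcsin$, a positive edge contributes $\Delta_{i,i+1}^{\omega_{\rm s}}=-\arcsin(\omega_{\rm s}/|a_{i,i+1}|)$ and a negative edge contributes $\Delta_{i,i+1}^{\omega_{\rm s}}=\pi-\arcsin(\omega_{\rm s}/|a_{i,i+1}|)$. Since the interaction graph is a single cycle, traversing it once must return to the starting angle, so the differences, with the explicit lifts provided by~\eqref{eq:angle_diffs}, must sum to an integer multiple of $2\pi$: $\sum_{i=1}^n\Delta_{i,i+1}^{\omega_{\rm s}}=2\pi q$ for some $q\in\mathbb{Z}$, the winding number. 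Substituting the two cases above, the negative edges contribute a total of $n^-\pi$ and the remaining terms collect into $-\sum_{i=1}^n\arcsin(\omega_{\rm s}/|a_{i,i+1}|)$; rearranging yields exactly~\eqref{eq:implicit_freq_stab}.

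The counting then follows from studying $F(\omega_{\rm s}):=\sum_{i=1}^n\arcsin(\omega_{\rm s}/|a_{i,i+1}|)$. By Proposition~\ref{prop:bound_sync_freq} its natural domain is $[-a_{\min},a_{\min}]$, on which it is continuous and strictly increasing, and by definition of $\sigma$ it maps this interval bijectively onto $[-2\pi\sigma,2\pi\sigma]$. Stability excludes the endpoints: if $|\omega_{\rm s}|=a_{\min}$, then for a minimal-weight edge $\cos(\theta_i^*-\theta_{i+1}^*)=0$, violating the strict inequality~\eqref{eq:stab_condi}. Hence stable frequencies satisfy $|\omega_{\rm s}|<a_{\min}$, and $F$ restricts to a bijection onto the open interval $(-2\pi\sigma,2\pi\sigma)$. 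Equation~\eqref{eq:implicit_freq_stab} therefore admits a (necessarily unique) solution precisely when $(n^--2q)\pi\in(-2\pi\sigma,2\pi\sigma)$, i.e. when $q\in(n^-/2-\sigma,n^-/2+\sigma)$. Conversely, each such integer $q$ determines $\omega_{\rm s}=F^{-1}\!\big((n^--2q)\pi\big)\in(-a_{\min},a_{\min})$, for which the differences~\eqref{eq:angle_diffs} satisfy~\eqref{eq:stab_condi} strictly, solve~\eqref{eq:sync_state}, and sum to $2\pi q$, thus realizing a genuine stable synchronous state. Injectivity of $F$ makes distinct admissible $q$ yield distinct $\omega_{\rm s}$, giving a bijection between stable synchronization frequencies and integers in $(n^-/2-\sigma,n^-/2+\sigma)$.

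I expect the main subtlety to lie not in the algebra but in the closure bookkeeping and the open/closed endpoints: one must argue that the winding constraint is exactly $\sum_i\Delta_{i,i+1}^{\omega_{\rm s}}\in2\pi\mathbb{Z}$, independently of the chosen lifts, and that the strict stability inequality is precisely what rules out $|\omega_{\rm s}|=a_{\min}$ and thereby turns the closed image of $F$ into an open interval. Spelling out the converse realizability, namely that every admissible $q$ corresponds to an honest stable state and not merely to a formal root of~\eqref{eq:implicit_freq_stab}, is the step I would treat most carefully.
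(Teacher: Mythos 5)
Your proof is correct and follows essentially the same route as the paper: the parametrization of stable states from Lemma~\ref{lem:stability}, the winding-number closure condition, monotonicity of the arcsine sum, the bound $|\omega_{\rm s}|\leq a_{\min}$ from Proposition~\ref{prop:bound_sync_freq}, and strictness of~\eqref{eq:stab_condi} to exclude the endpoint frequencies $\pm a_{\min}$. The only difference is that you fold the converse direction (realizability of each admissible $q$ by a genuine stable state) into the proof itself, whereas the paper establishes it by the same explicit construction immediately after its proof.
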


\begin{proof}
 Let $\omega_{\rm s}\in\mathbb{R}$ be the frequency at the stable synchronous state $\bm{\theta}^{\omega_{\rm s}}\in\mathbb{T}^n$, 
 and let us define $\V^+$ (resp. $\V^-$) the set of vertices whose out-going edge has positive (resp. negative) weight. 
 
 Following the stability analysis of section~\ref{sec:stab_cyc}, the angle differences of the synchronous state are given by~\eqref{eq:angle_diffs}. 
 Furthermore, the sum of angle differences around the cycle has to be an integer multiple of $2\pi$, called the \textit{winding number} $q\in\mathbb{Z}$~\cite{Erm85,Jan03}, 
 \begin{align}\label{eq:winding}
  \sum_{i=1}^n\Delta_{i,i+1} &= 2\pi q\, .
 \end{align}
 
 \begin{remark}
  Note that the winding number is well-defined on any cycle and for any state of the system. 
  Thus at any time of the dynamics, we can compute the winding number which is always an integer. 
 \end{remark}

 As angle differences $\Delta_{i,i+1}^{\omega_{\rm s}}\in(-\pi,\pi]$, the winding number is bounded as $-n/2 \leq q \leq n/2$. 
 Replacing~\eqref{eq:angle_diffs} in \eqref{eq:winding} gives
 \begin{align}
  -\sum_{i\in\V^+}\arcsin(\omega_{\rm s}/a_{i,i+1}) + \sum_{i\in\V^-}\pi + \arcsin(\omega_{\rm s}/a_{i,i+1}) &= 2\pi q \\
%   -\sum_{i\in\V^+}\arcsin(\omega_{\rm s}/|a_{i,i+1}|) + \sum_{i\in\V^-}\arcsin(-\omega_{\rm s}/|a_{i,i+1}|) &= 2\pi q - \pi n^-\\
  -\sum_{i=1}^n\arcsin(\omega_{\rm s}/|a_{i,i+1}|) &= 2\pi q - \pi n^-\, ,\label{eq:ifs_2}
 \end{align}
 which is~\eqref{eq:implicit_freq_stab}. 
 The left-hand-side of~\eqref{eq:ifs_2} is monotonically decreasing in $\omega_{\rm s}$. 
 For a given winding number $q$, there is then at most one value of $\omega_{\rm s}$ satisfying~\eqref{eq:ifs_2}. 
 As the number of possible values for $q$ is finite, the number of possible synchronization frequencies is finite as well. 
 
 By boundedness of $\omega_{\rm s}$ (Proposition~\ref{prop:bound_sync_freq}) and monotonicity of \eqref{eq:ifs_2}, we can compute the number of possible winding numbers, 
 which is the number of possible stable synchronization frequencies. 
 The lower bound $q_{\min}$ (resp. upper bound $q_{\max}$) is obtained by taking $\omega_{\rm s}=a_{\min}$ (resp. $\omega_{\rm s}=-a_{\min}$),
 \begin{align}\label{eq:q_bounds}
  q_{\min} &= \frac{n^-}{2}-\frac{1}{2\pi}\sum_{i=1}^n\arcsin\left(a_{\min}/|a_{i,i+1}|\right)\, , & q_{\max} &= \frac{n^-}{2}+\frac{1}{2\pi}\sum_{i=1}^n\arcsin\left(a_{\min}/|a_{i,i+1}|\right)\, .
 \end{align}
 
 When $\omega_{\rm s}=\pm a_{\min}$, for each edge such that $a_{i,i+1}=\pm a_{\min}$, the angle difference is $\pm\pi/2$. 
 Hence the corresponding term in the Jacobian matrix vanishes, and according to Lemma~\ref{lem:stability}, the corresponding fixed point is unstable. 
 Stable fixed points then cannot realize the extreme values of winding number $q_{\min}$ and $q_{\max}$ and the winding number belongs to the interval $(n^-/2-\sigma,n^-/2+\sigma)$. 
\end{proof}

We can even construct the stable synchronous states. 
Let $q\in(n^-/2-\sigma,n^-/2+\sigma)\cap\mathbb{Z}$ and assume that $\omega_{\rm s}$ solves~\eqref{eq:implicit_freq_stab}. 
The corresponding stable synchronous state can be constructed by defining $\bm{\theta}^{\omega_{\rm s}}$ as 
\begin{align}
 \theta_i^{\omega_{\rm s}} &= -n_i^- + \sum_{j=1}^{i-1}\arcsin\left(\omega_{\rm s}/|a_{j,j+1}|\right)\, , & i &\in \{1,...,n\}\, ,
\end{align}
where $n_i^-$ is the number of edges with negative weight on the path from vertex $1$ to vertex $i$. 
Furthermore this synchronous state is stable because by construction it satifies~\eqref{eq:stab_condi}.
 
Proposition~\ref{prop:freq}  implies that there is a finite number of stable synchronous states and it completely characterizes them. 
In particular, it allows to compute the number of stable synchronous states for identical positive weights and identical frequencies, which was obtained by Rogge and Aeyels~\cite{Rog04}.

\begin{corollary}[Rogge and Aeyels~\cite{Rog04}]
 For an oriented cycle with identical positive weights and identical frequencies, the number of stable synchronous states is ${\cal N} = 2\cdot{\rm Int}\left[(n-1)/4\right]+1$.
\end{corollary}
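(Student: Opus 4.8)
The plan is to specialize Proposition~\ref{prop:freq} to this setting and then reduce the claim to an elementary integer count. With identical positive weights, all edge weights equal a common value $a>0$, so there are no negatively weighted edges, i.e.\ $n^-=0$, and $a_{\min}=a$. Every term in the definition of $\sigma$ then becomes $\arcsin(a_{\min}/|a_{i,i+1}|)=\arcsin(1)=\pi/2$, so that
\begin{align*}
 \sigma &= \frac{1}{2\pi}\sum_{i=1}^n\frac{\pi}{2} = \frac{n}{4}\, .
\end{align*}

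By Proposition~\ref{prop:freq}, the number of stable synchronization frequencies is the number of integers in the interval $(n^-/2-\sigma,\,n^-/2+\sigma)=(-n/4,\,n/4)$. Moreover, for each such integer $q$ the monotonicity of the left-hand side of~\eqref{eq:ifs_2} yields a unique frequency $\omega_{\rm s}$, and the explicit construction given after Proposition~\ref{prop:freq} produces a unique corresponding stable synchronous state $\bm{\theta}^{\omega_{\rm s}}$ (for positive weights all angle differences equal $-\arcsin(\omega_{\rm s}/a)$, so the state is determined up to a global rotation). Hence the number $\mathcal{N}$ of stable synchronous states equals the number of integers in the open interval $(-n/4,\,n/4)$.

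It then remains to count these integers. Since $0$ always lies in $(-n/4,\,n/4)$ and the interval is symmetric about the origin,
\begin{align*}
 \mathcal{N} &= 1 + 2\cdot\bigl|\{q\in\mathbb{Z}\colon 0<q<n/4\}\bigr|\, .
\end{align*}
I would finish by checking the four residue classes of $n$ modulo $4$: in each case $|\{q\in\mathbb{Z}\colon 0<q<n/4\}|=\mathrm{Int}[(n-1)/4]$, which gives $\mathcal{N}=2\cdot\mathrm{Int}[(n-1)/4]+1$. There is no substantive obstacle here; the only point requiring care is the treatment of the interval endpoints, which matters when $n\equiv0\pmod 4$: there $n/4$ is itself an integer but is correctly excluded by the \emph{open} interval, since it corresponds to $\omega_{\rm s}=\pm a_{\min}$, already identified in Proposition~\ref{prop:freq} as an unstable boundary state. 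The argument is thus a direct specialization followed by elementary bookkeeping of the open endpoints across residue classes.
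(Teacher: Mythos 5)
Your proof is correct and takes essentially the same route as the paper's: specialize Proposition~\ref{prop:freq} to get $n^-=0$ and $\sigma=n/4$, then count the integers in the open interval $(-n/4,\,n/4)$. Your write-up is in fact slightly more careful than the paper's own three-line proof, which contains a typo ($\sigma=n\pi/2$ where $n/4$ is meant, as your computation shows) and passes silently over both the frequency-to-state correspondence and the residue-class bookkeeping that you spell out.
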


\begin{proof}
 Identical positive weights implies that $\sigma=n\pi/2$ and $n^-=0$. 
 Then $q_{\min}=-n/4$ and $q_{\max}=n/4$ and the number of integers in $(-n/4,n/4)$ is ${\cal N}=2\cdot{\rm Int}\left[(n-1)/4\right]+1$. 
\end{proof}

\begin{corollary}\label{cor:less_4}
 For cycles of length lower than or equal to $4$, there is a unique stable synchronous state. 
\end{corollary}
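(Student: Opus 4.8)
The plan is to derive the statement directly from Proposition~\ref{prop:freq}. That proposition, together with the explicit construction of the stable synchronous states given immediately after it, puts the stable synchronous states in bijection with the integers lying in the open interval $(n^-/2-\sigma,\,n^-/2+\sigma)$, each such integer being the winding-number index $q$ of a distinct stable state. Hence it suffices to show that, for an oriented cycle of length $n\leq 4$, this interval contains exactly one integer.

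The first step is to bound the half-width $\sigma$. Since $a_{\min}=\min_i|a_{i,i+1}|$, every summand of $\sigma=\frac{1}{2\pi}\sum_{i=1}^n\arcsin(a_{\min}/|a_{i,i+1}|)$ lies in $(0,\pi/2]$, and the summand attached to a minimal-magnitude edge equals $\arcsin(1)=\pi/2$. Therefore $\tfrac14\leq\sigma\leq \tfrac{1}{2\pi}\cdot n\cdot\tfrac{\pi}{2}=\tfrac n4$, so that $\sigma\leq 1$ whenever $n\leq 4$. The interval $(n^-/2-\sigma,\,n^-/2+\sigma)$ thus has total length at most $2$, and everything hinges on how the single admissible "slot'' of length $2\sigma$ is positioned relative to $\mathbb{Z}$.

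The second step is to count integers, and here the parity of $n^-$ is decisive. If $n^-$ is even, the centre $n^-/2$ is itself an integer; since the interval is open with half-width $\sigma\leq 1$, an integer offset $j$ lies inside only if $|j|<\sigma\leq 1$, forcing $j=0$. Thus $n^-/2$ is the unique integer inside and there is exactly one stable synchronous state. In particular this settles the positive-weight case $n^-=0$, where the unique state is the in-phase-type state, recovering and slightly extending the Rogge--Aeyels count for $n\leq 4$ to non-identical positive weights.

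The main obstacle is the case of odd $n^-$, where the centre $n^-/2$ is a half-integer and the two nearest integers sit symmetrically at distance $1/2$ from it. Both of them lie in the open interval precisely when $\sigma>1/2$, so the bound $\sigma\leq 1$ alone does \emph{not} single out one integer: a $3$-cycle with one negative edge of equal magnitude gives $\sigma=3/4>1/2$, and a direct check of its equilibria confirms \emph{two} stable synchronous states. Consequently the clean counting argument proves uniqueness exactly when $n^-$ is even. To finish, I would therefore either read the corollary under the positivity standing in the preceding Rogge--Aeyels corollary (so $n^-=0$), or, to retain full generality, add a structural hypothesis forcing $\sigma\leq 1/2$ in the odd-$n^-$ case; pinning down the correct hypothesis for odd $n^-$ is the crux of the proof.
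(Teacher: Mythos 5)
Your route is exactly the one the paper leaves implicit: Corollary~\ref{cor:less_4} is stated without proof as a direct consequence of Proposition~\ref{prop:freq}, i.e., one counts the integers in $(n^-/2-\sigma,\,n^-/2+\sigma)$ using $\sigma\le n/4\le 1$, and your counting is correct. But what you label the ``main obstacle'' is not a gap in your argument --- it is an error in the paper. Read with the generality the paper claims (arbitrary signs, cf.\ the remark closing section~\ref{sec:cycles}), the corollary is false whenever $n^-$ is odd, and your $3$-cycle is a genuine counterexample. Concretely, take $n=3$, $a_{12}=a_{23}=1$, $a_{31}=-1$, so $n^-=1$, $\sigma=3/4$, and the interval $(-1/4,5/4)$ contains two integers. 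The states with angle differences $(\Delta_{12},\Delta_{23},\Delta_{31})=(-\pi/3,-\pi/3,2\pi/3)$ and $(\pi/3,\pi/3,-2\pi/3)$ are synchronous with frequencies $\omega_{\rm s}=\sqrt{3}/2$ and $\omega_{\rm s}=-\sqrt{3}/2$ respectively, and both satisfy condition~\eqref{eq:stab_condi} of Lemma~\ref{lem:stability}, since every edge gives $a_{i,i+1}\cos(\theta_i^*-\theta_{i+1}^*)=1/2>0$: two distinct locally exponentially stable synchronous states. Your even-$n^-$ argument is the correct proof in the only generality where the statement holds (in particular for positive weights, recovering the Rogge--Aeyels count $2\,{\rm Int}\left[(n-1)/4\right]+1=1$ for $n\le4$); intuitively, even $n^-$ is the case where the signed cycle is gauge-equivalent to a positive one.

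Two corrections to your closing remarks. First, your proposed fix of imposing $\sigma\le 1/2$ when $n^-$ is odd does not rescue the statement: it makes the open interval contain \emph{no} integer (e.g., weights $(1,3,-3)$ give $\sigma\approx0.36$), so there is then no stable synchronous state at all, and ``there is a unique stable synchronous state'' fails in the existence direction instead. For $n\le4$ and odd $n^-$ the count is $2$ if $\sigma>1/2$ and $0$ if $\sigma\le1/2$, never $1$; the only correct hypothesis is that $n^-$ be even (or the positive-weight reading). Second, a bookkeeping point if you write this up: the index $q$ in~\eqref{eq:implicit_freq_stab} is not always the winding number computed with differences reduced to $(-\pi,\pi]$, because~\eqref{eq:angle_diffs} can return values above $\pi$ on negative edges (both stable states above actually have topological winding number $0$); what your count uses --- the bijection between stable frequencies and integers in $(n^-/2-\sigma,n^-/2+\sigma)$ --- is nevertheless valid. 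Note finally that the error propagates into section~\ref{sec:combi}, where Corollary~\ref{cor:less_4} is invoked to conclude that a short leading cycle synchronizes at the frequency of its oscillators: in the counterexample the stable frequencies are $\pm\sqrt{3}/2\neq0$, so that construction also requires $n^-$ even.
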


\begin{remark}
 As in section~\ref{sec:acyclic}, the results of this section are not subject to any limitation on the magnitude and sign of the coupling strenghts. 
 We thus cover all possible cases of oriented cycles with identical natural frequencies. 
\end{remark}

\subsection{Correlation between initial and final winding numbers}\label{sec:condi}
Theorem 3.4 of~\cite{Ha12} gives some explicit conditions for initial conditions to converge to a given synchronous state. 
We give here conditions on initial states limiting the possible final states. 
The knowledge of stable synchronous states and the LaSalle function $V$ in~\eqref{eq:lasalle} allow to derive some conditions on the final state of the system, for given initial conditions. 
We showed that $V$ is monotonically decreasing along the trajectories of the system. 
Thus, if $V(\bm{\theta}^\circ)<V(\bm{\theta}^*)$, starting at $\bm{\theta}^\circ$, the system cannot converge to the synchronous state $\bm{\theta}^*$. 

For instance, for positive identical coupling constants $a_{i,i+1}\equiv K$, it is known~\cite{Rog04} that the stable synchronous states satisfy $\theta_i-\theta_{i+1}=2\pi q/n$, 
where $q\in\mathbb{Z}$ is the winding number. 
At such a state, the function $V$ takes value $V_q\coloneqq 2nK[1-\cos(2\pi q/n)]$. 
Hence, for almost all initial conditions $\bm{\theta}^\circ$ such that $V(\bm{\theta}^\circ)\leq V_q$, the system will converge to a stable synchronous state with winding number $|q'|\leq |q|$. 
This limits the possible final states for given initial conditions (see Figure~\ref{fig:Vq}). 
\begin{figure}
 \centering
 \includegraphics[width=.8\textwidth]{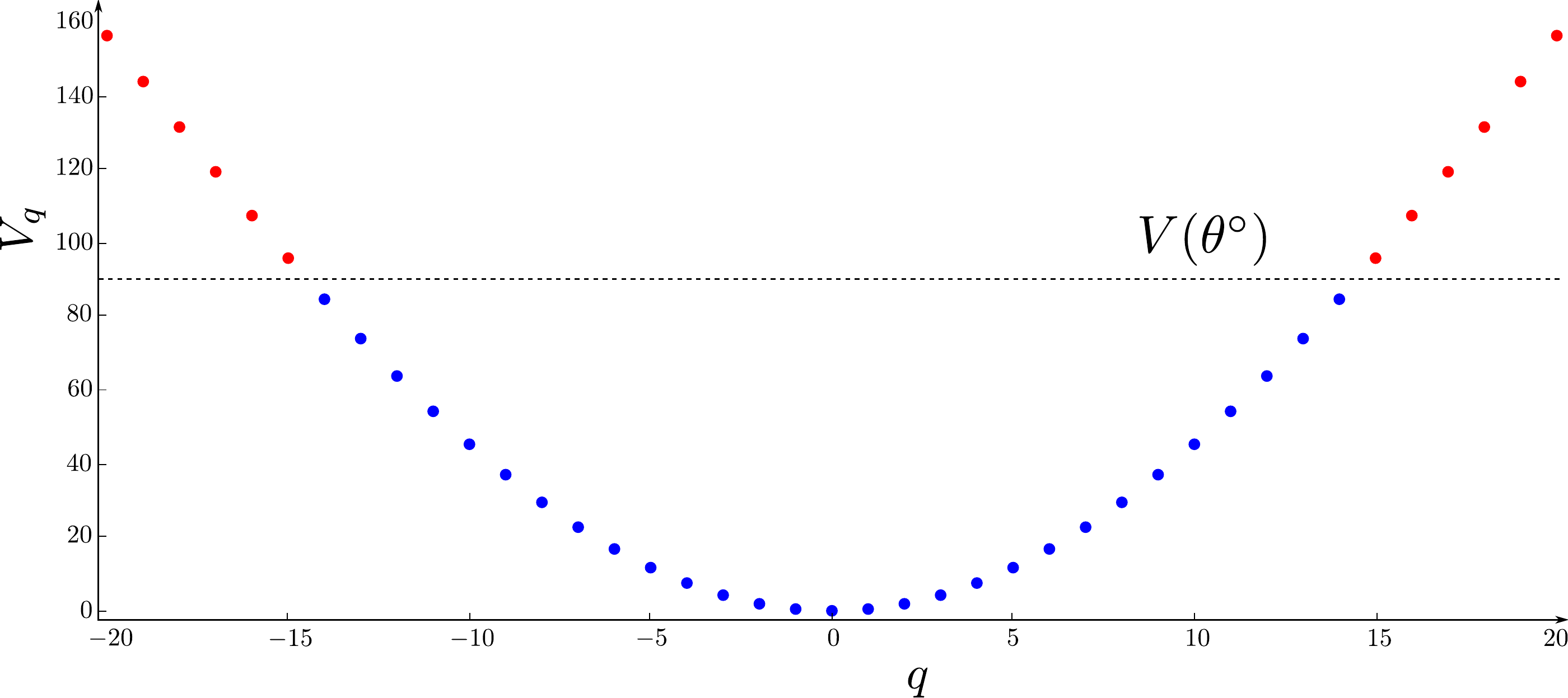}
 \caption{\it LaSalle function $V$ defined in~(\ref{eq:lasalle}) at the stable synchronous states with winding numbers $q\in\{-20,...,20\}$, 
 for an oriented cycle of length $n=83$ with coupling strength $a_{i,i+1}=1$ for all $i$. 
 If, at initial conditions $\bm{\theta}^\circ$, the function $V$ takes the value indicated by the dashed black line, then the final winding number corresponds to one of the blue dots, 
 and the red dots cannot be reached. }
 \label{fig:Vq}
\end{figure}

These observations suggest that along the time evolution of the system, the winding number tends to decrease. 
To corroborate this, we simulated the time evolution of \eqref{eq:main_cycle} for $100\,000$ random initial conditions, picked with uniform distribution on the state space. 
We consider an oriented cycle of $n=83$ oscillators, with identical natural frequencies, and for each simulation, we compare the initial and final winding numbers, 
$q_{\rm ini}$ and $q_{\rm fin}$ respectively. 
In the left panel of Figure~\ref{fig:distri_qf}, the cloud of points follows a line through the origin with slope larger than one. 
This implies that even if in some cases, the winding number increases along the simulation, there is a general trend towards a decrease of the winding number along the trajectory of the system, 
especially for large winding numbers. 
The center panel of Figure~\ref{fig:distri_qf} displays, for each values of $q_{\rm ini}$, the distribution of $q_{\rm fin}$ normalized to zero mean. 
To the eye, these curves look Gaussian. 
We then fit a Gaussian distribution to each of them and show their mean and standard deviation in the right panel of Figure~\ref{fig:distri_qf}. 
The slope of the mean $\mu$ is less than one, which indicates again that winding numbers tend to decrease (in absolute value) along simulations. 
It is striking to notice that the standard deviation (which was not normalized) is almost the same for all initial winding numbers. 
We have not been able to explain this occurence, which we did not expect. 

\begin{figure}
 \centering
 \includegraphics[width=\textwidth]{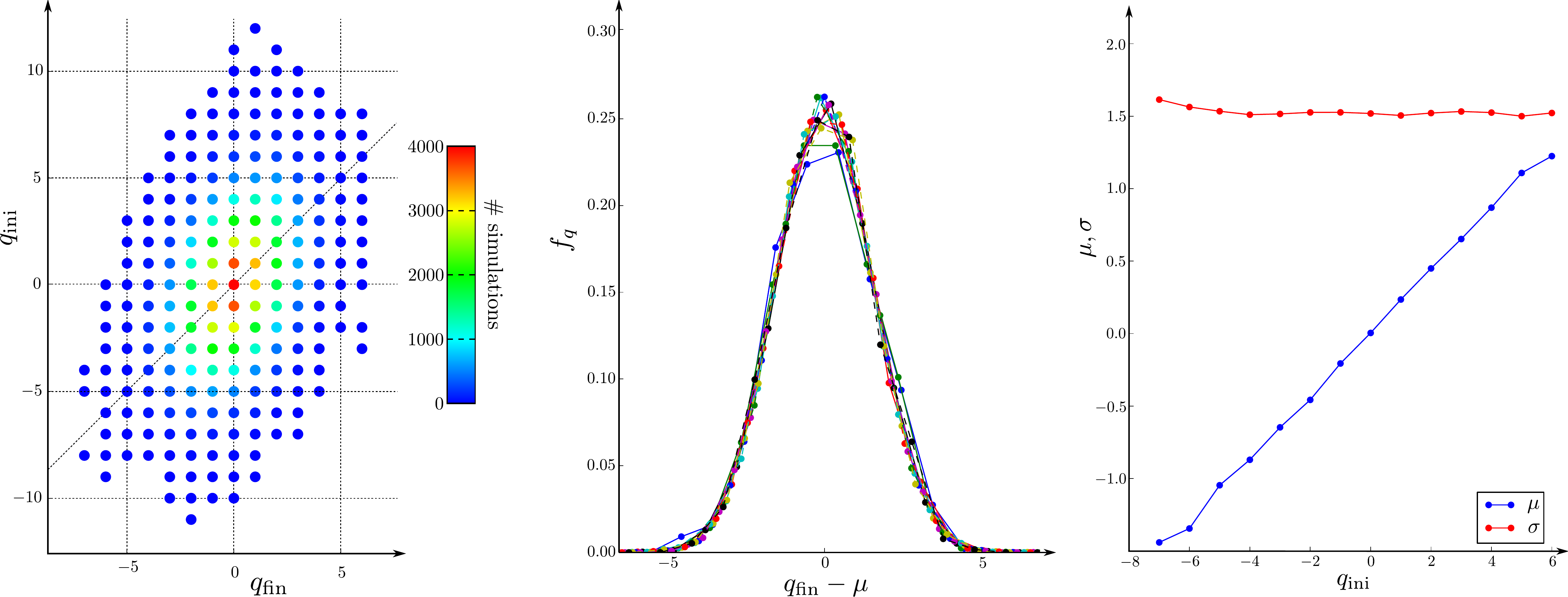}
 \caption{\it Left: Distribution of initial and final winding numbers for 100\,000 simulations of (\ref{eq:main_cycle}) with $n=83$, $\omega_i\equiv0$ and random initial conditions. 
 The color code indicates how many simulations started with winding number $q_{\rm ini}$ and ended at winding number $q_{\rm fin}$. 
 We see that the dynamics tends to reduce the winding number in absolute value.
 For instance, we see that $|q_{\rm fin}|<|q_{\rm ini}|$ for all simulations with $|q_{\rm ini}|>8$. 
 Center: Distribution of final winding number for our simulations. 
 Each curve corresponds to a different initial winding number $q_{\rm ini}$, ranging from $-7$ to $6$. 
 We consider only initial winding number with large enough statistics. 
 For each curve, the mean is normalized to zero. 
 To the eye, each curve follows a Gaussian distribution. 
 Notice that standard deviations are not normalized, i.e., each curve has approximately the same standard deviation.
 Right: Mean and standard deviation of the Gaussian fit for each curve of the center panel. 
 The slope of the mean with respect to $q_{\rm ini}$ is approximately $0.21$, corroborating the fact that the winding number tends to decrease along simulations. 
 We observe again that each curve has the same standard deviation. }
 \label{fig:distri_qf}
\end{figure}

\section{Combination of cyclic and acyclic networks}\label{sec:combi}
Based on the results of the previous sections, we extend next our results to networks built with oriented cycles and acyclic oriented graphs. 
Given an oriented graph ${\cal G}_1$, we say that a subgraph ${\cal G}_2$ is a \textit{leading component} if there are no edges going out of ${\cal G}_2$. 

\subsection{Cases with almost global synchronization}
\subsubsection{One large leading cycle}
Consider a graph composed of an acyclic oriented graph with a single leading component which is an oriented cycle (left panel of Figure~\ref{fig:leading_cycle}), 
whose oscillators have identical natural frequencies. 
The cycle synchronizes by Theorem~\ref{thm:cyc}, at a synchronization frequency $\omega_{\rm s}$.
The acyclic part is then led by oscillators with identical frequencies. 
Applying the same recursive argument as in the proof of Theorem~\ref{thm:acyc_suf}, we can show that all the acyclic part synchronizes for almost all initial conditions, 
provided that natural frequencies of its oscillators satisfy 
\begin{align}
 |\omega_i-\omega_{\rm s}| &\leq \rho_i\, , & i &\in \{1,...,n\}\, .
\end{align}

\subsubsection{Multiple small leading cycles}
Consider a graph composed of an acyclic oriented graph whose leading components are either simple leaders or a leading cycles of length at most $4$ (right panel of Figure~\ref{fig:leading_cycle}). 
Here the leaders and the oscillators of the leading cycles have identical natural frequencies. 
The conditions for global synchronization are then the same as in Theorem~\ref{thm:acyc_suf}. 
A cycle of length less than or equal to $4$ synchronizes to the frequency of its oscillators because it is too short to carry a nonzero winding number, see Corollary~\ref{cor:less_4}. 
The acyclic part is then led by oscillators with identical frequencies and almost always synchronizes following the same recursive argument as in the proof of Theorem~\ref{thm:acyc_suf}. 

\begin{figure}
 \centering
 \includegraphics[width=.9\textwidth]{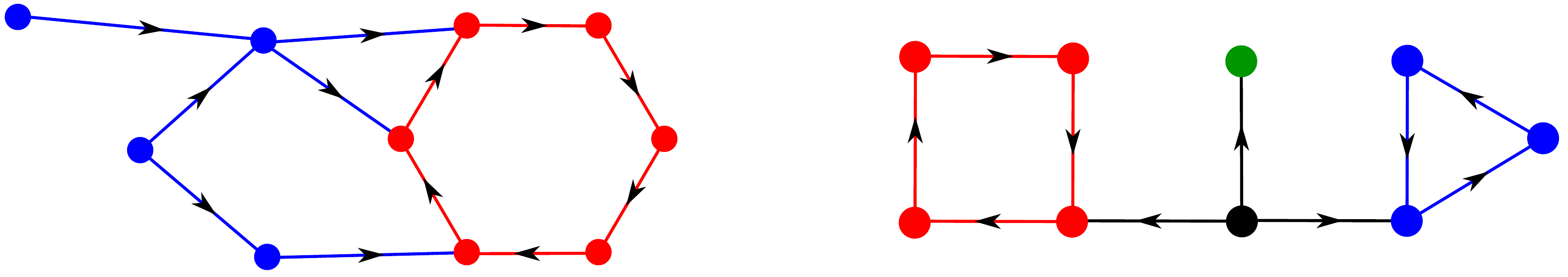}
 \caption{\it Left: Example of an oriented graph with only a leading cycle. 
  The leading cycle is in red and the the blue part is an acyclic oriented graph. 
  Such a network globally synchronizes.
  Right: Example of an almost globally synchronizing oriented graph with multiple leaders and leading cycles of length not larger than $4$.}
 \label{fig:leading_cycle}
\end{figure}

\subsection{Cases where global synchronization is not achieved}
\subsubsection{Multiple large leading cycle}
A graph composed of an acyclic oriented graph with at least two leading components, where (at least) one of them is a leading cycle of length 5 or more (left panel of Figure~\ref{fig:six_n_leader}). 
The leading cycle can synchronize to various frequencies. 
If it does not synchronizes to the same frequency as the other leader(s), then synchrony cannot be reached. 
But if it synchronizes to the same frequency, then the network synchronizes. 
Synchronization is then not global, but is possible for a set of initial conditions with nonzero measure, for some natural frequencies and edge weights.

\subsubsection{Two large cycles}
We numerically verified that two cycles of length more than 5 coupled as in the right panel of Figure~\ref{fig:six_n_leader} can synchronize or not depending on inital conditions.
We have not been able to get any analytical insight for this case. 

\begin{figure}
 \centering
 \includegraphics[width=.8\textwidth]{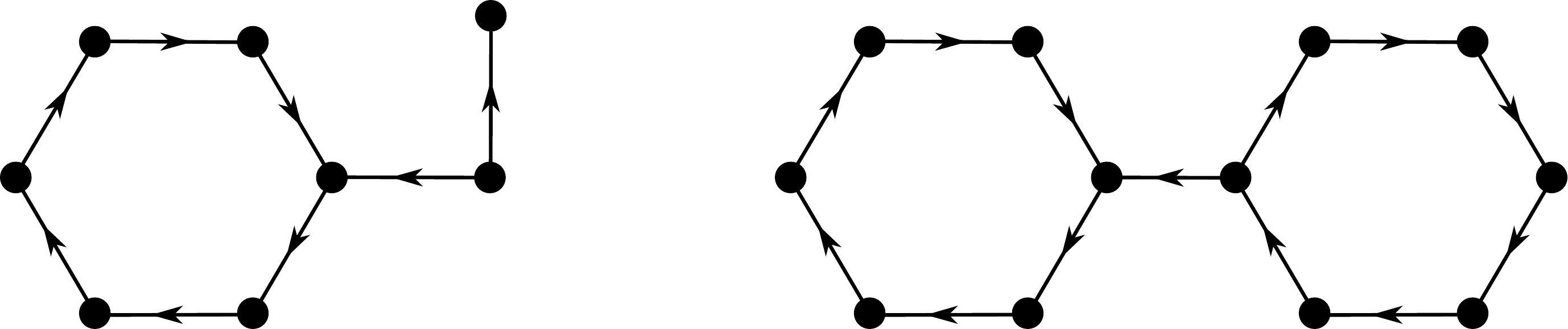}
 \caption{\it Left: Example of a nonglobally synchronizing network. 
  The cycle can synchronize to different frequencies depending on the initial conditions and the solitary leader has a fixed frequency.
  Right: Example of two cycles connected by an oriented edge. 
  Simulations indicate that this network does not always synchronize, depending on initial conditions. }
 \label{fig:six_n_leader}
\end{figure}

\section{Conclusion}
We studied synchronization in oriented and signed Kuramoto oscillator networks. 
By considering the simplest oriented interaction graphs, we completely characterized frequency-synchronization of the Kuramoto model on oriented and signed acyclic networks 
and on oriented and signed cycles with identical natural frequencies. 
All our results are valid regardless of the magnitude and sign of the couplings, they are then as general as possible. 

In oriented and signed acyclic networks, we gave necessary and sufficient conditions for almost global synchronization and we showed that, if it exists, there is a unique stable synchronous state. 
We further showed in section~\ref{sec:tight} that, in general, these conditions cannot be improved without more conditions on the networks considered. 

We proved that on oriented and signed cycles with identical natural frequencies, the Kuramoto oscillators always synchronizes. 
Furthermore, we showed that the number of stable synchronous states is finite, and we gave an explicit formula to compute their number. 
We observed in section~\ref{sec:condi} that, for initial conditions with a given winding number $q_{\rm ini}$, the distribution of winding numbers after the time evolution of~\eqref{eq:main_cycle} 
is Gaussian. 
It is striking that for all values of $q_{\rm ini}$, the variance of the distribution is identical. 
We are, at this point, not able to explain this fact. 

We finally showed through some examples, that in more general oriented interaction graphs, the dynamics become much more rich, 
even for some simple combinations of acyclic graphs and oriented cycles.

\appendix
\section{Asymptotically autonomous semiflows}\label{ap:mischaikow}
In~\cite{Mis95}, Mischaikow et al. relate the solution of the nonautonomous system $\dot{x}=f(t,x)$ on $\mathbb{R}^n$ to the solution of the autonomous system $\dot{y}=g(y)$ on $\mathbb{R}^n$, 
where $f(t,\cdot)\to g(\cdot)$ for $t\to\infty$. 
We give some preliminary definitions and state then two results that we need in the proof of Theorem~\ref{thm:acyc_suf}. 

\begin{definition}[Mischaikow et al.~\cite{Mis95}]\label{def:mischaikow}
 Let $\Theta\colon T\times \mathbb{R}^n\to \mathbb{R}^n$ be a continuous function, where $T=\{(t,s)\colon 0\leq s\leq t<\infty\}$. 
 The function $\Theta$ is a \emph{nonautonomous semiflow} on $\mathbb{R}^n$ if it satisfies 
 \begin{align}
  &\Theta(s,s,x) = x\, , && s \geq 0\, ;\\
  &\Theta(t,s,\Theta(s,r,x)) = \Theta(t,r,x)\, , && t \geq s\geq r\geq 0\, .
 \end{align}
 The semiflow is called \emph{autonomous} if in addition
 \begin{align}
  \Theta(t+r,s+r,x) &= \Theta(t,r,x)\, .
 \end{align}
 A nonautonomous semiflow $\Theta$ is called \emph{asymptotically autonomous with limit semiflow} $\Phi$, if $\Phi$ is an autonomous semiflow on $\mathbb{R}^n$ and 
 \begin{align}
  \Theta(t_j+s_j,s_j,x_j)\ &\to \Phi(t,x)\, , & j &\to \infty\, ,
 \end{align}
 for any sequences $t_j\to t$, $s_j\to\infty$ and $x_j\to x$, with $x,x_j\in \mathbb{R}^n$, $0\leq t,t_j<\infty$, and $s_j\geq0$. 
 
 The \emph{$\omega$-limit set} of $\Theta$ is 
 \begin{align}
  \Lambda_{\Theta}(s,x_0) &\coloneqq \left\{x\in \mathbb{R}^n\colon \exists~\{t_j\},~j\in\mathbb{N},~{\rm s.t.}~\lim_{j\to\infty}t_j=
  \infty~{\rm and}~\lim_{j\to\infty}\Theta(s,t_j,x_0)=x \right\}\, .
 \end{align}
\end{definition}

\begin{definition}[Mischaikow et al.~\cite{Mis95}]\label{def:mischaikow_inv}
 A subset $A$ of $\mathbb{R}^n$ is said to be \emph{positively invariant} for an autonomous semiflow $\Phi$ if for all $a\in A$ and $t\geq 0$, $\Phi(t,a)\in A$. 
 The subset $A$ is \emph{invariant} for $\Phi$ if it is positively invariant and for all $a\in A$ and $t\geq 0$, there exits $b\in a$ such that $\Phi(t,b)=a$. 
\end{definition}

\begin{definition}[Mischaikow et al.~\cite{Mis95}]\label{def:mischaikow_chain_rec}
 Let $A$ be a nonempty positively invariant subset of $\mathbb{R}^n$ and $x,y\in A$. 
 For $\varepsilon > 0$, $t > 0$, an $(\varepsilon,t)$\emph{-chain} from $x$ to $y$ (in $A$) is a sequence $\{x=x_1,x_2,...,x_{n+1}=y\, ; t_1,t_2,...,t_n\}$ of points 
 $x_i\in A$ and times $t_i\geq t$ such that $\|\Phi(t_i,x_i) - x_{i+1}\|<\varepsilon$, $i\in\{1,...,n\}$. 
 A point $x\in A$ is called \emph{chain recurrent} (in $A$) if for every $\varepsilon>0$, $t>0$ there is an $(\varepsilon,t)$-chain from $x$ to itself in $A$. 
 The set $A$ is said to be \emph{chain recurrent} if every point in $A$ is chain recurrent in $A$. 
\end{definition}

Roughly speaking, a set $A$ is chain recurrent if for any $x\in A$, there is a sequence of trajectories starting in $A$ remaining arbitrarily close to $A$, 
such that the end point of the $i^{\rm th}$ is arbitrarily close to the starting point of $(i+1)^{\rm th}$ trajectory, and such that the last trajectory ends arbitrarily close to $x$. 

Consider the systems of ordinary differential equations $\dot{x}=f(t,x)$ and $\dot{y}=g(y)$ on $\mathbb{R}^n$. 
Denote by $\Theta(t,s,x_0)$ the solution $x(t)$ of the first system, with $x(s)=x_0$, and denote by $\Phi(t,x_0)$ the solution $y(t)$ of the second system, with $y(0)=x_0$. 

\begin{proposition}[Mischaikow et al.~\cite{Mis95}]\label{prop:mischaikow}
 If $f(t,x)\to g(x)$, $t\to\infty$, uniformly\footnote{Here \emph{uniform} means that $\forall~\varepsilon>0$, 
 $\exists~T>0$ such that $|f(t,x)-g(x)|<\varepsilon$, $\forall~t\geq T$ and $\forall~x\in\mathbb{R}^n$.}
 on compact subsets of $\mathbb{R}^n$, 
 then $\Theta$ is asymptotically autonomous with limit semiflow $\Phi$. 
\end{proposition}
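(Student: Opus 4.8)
The plan is to unwind both semiflows into their defining integral equations and to control their difference with a Gronwall estimate, the only nonroutine ingredient being an a priori confinement of the nonautonomous trajectories to a compact tube around the limit trajectory. Fix sequences $t_j\to t$, $s_j\to\infty$, $x_j\to x$. First I would introduce the time-shifted trajectory $u_j(\tau):=\Theta(s_j+\tau,s_j,x_j)$, which by the setup solves $\dot u_j(\tau)=f(s_j+\tau,u_j(\tau))$ with $u_j(0)=x_j$, so that the quantity to be controlled is exactly $u_j(t_j)=\Theta(t_j+s_j,s_j,x_j)$, and write $y(\tau):=\Phi(\tau,x)$ for the target autonomous trajectory. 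Subtracting the integral forms of $u_j$ and $y$ and inserting $\pm g(u_j(\sigma))$ splits the integrand into a \emph{drift} term $f(s_j+\sigma,u_j(\sigma))-g(u_j(\sigma))$, controlled by the hypothesis, and a \emph{Lipschitz} term $g(u_j(\sigma))-g(y(\sigma))$, controlled by regularity of $g$.

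The key step is to pin the $u_j$ to a compact set. Let $T:=1+\sup_j t_j<\infty$, let $\Gamma:=\{y(\tau):\tau\in[0,T]\}$ be the (compact) limit trajectory, and let $K$ be its closed unit neighbourhood, on which $g$ is Lipschitz with some constant $L$. Because $s_j+\sigma\geq s_j\to\infty$ and $f\to g$ uniformly on the compact set $K$, the drift term is bounded uniformly in $\sigma$ by $\varepsilon_j:=\sup_{z\in K,\,r\geq s_j}\|f(r,z)-g(z)\|\to0$. I would then run a stopping-time argument: on the maximal interval $[0,\tau_j^*]$ on which $u_j$ remains in $K$, Gronwall's inequality yields $\|u_j(\tau)-y(\tau)\|\leq(\|x_j-x\|+\varepsilon_j T)e^{LT}$. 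For $j$ large this bound falls below $1/2$, so $u_j$ never reaches $\partial K$; by the usual continuation argument this forces $\tau_j^*=t_j$, and the estimate holds on all of $[0,t_j]$.

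Evaluating at $\tau=t_j$ gives $\|u_j(t_j)-y(t_j)\|\to0$, while continuity of the autonomous semiflow (built into Definition~\ref{def:mischaikow}) gives $y(t_j)=\Phi(t_j,x)\to\Phi(t,x)$; combining the two yields $\Theta(t_j+s_j,s_j,x_j)\to\Phi(t,x)$, which is precisely the asymptotic autonomy required. The main obstacle is the confinement step: without it the uniform convergence $f\to g$ and the Lipschitz bound for $g$, both of which hold only on compact sets, cannot be invoked, and the Gronwall estimate has no region on which to operate; the stopping-time/continuation device is what closes this gap. If one prefers to avoid a Lipschitz hypothesis on $g$, the same confinement makes the $u_j$ uniformly bounded and equicontinuous on $[0,T]$, so an Arzel\`a--Ascoli extraction produces a uniform limit that solves $\dot y=g(y)$ with $y(0)=x$; uniqueness of this solution then identifies the limit as $\Phi(\cdot,x)$ and upgrades subsequential convergence to convergence of the full sequence.
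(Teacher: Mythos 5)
The paper never proves this proposition: it is quoted from Mischaikow, Smith and Thieme~\cite{Mis95} and used as a black box in the proof of Theorem~\ref{thm:acyc_suf}, so there is no in-paper argument to compare yours against. Judged on its own merits, your proof is correct, and it is the standard argument for the ODE special case of the cited result: time-shift the nonautonomous trajectory to $u_j(\tau)=\Theta(s_j+\tau,s_j,x_j)$, insert $\pm g(u_j)$ to split the error into a drift term and a Lipschitz term, confine $u_j$ to a compact tube $K$ around the limit trajectory $\Gamma$ by a stopping-time/continuation argument, close with Gronwall, and finish with continuity of $\tau\mapsto\Phi(\tau,x)$ to replace $y(t_j)$ by $\Phi(t,x)$. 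You correctly identify the confinement step as the crux: without it neither the uniform convergence $f\to g$ nor the Lipschitz bound for $g$, both of which are only available on compact sets, can be used. Two small remarks. First, the Lipschitz constant $L$ for $g$ on $K$ is a regularity hypothesis not literally present in the statement, but it (or at least uniqueness of solutions of $\dot y=g(y)$) is implicitly part of the setting, since otherwise $\Phi$ would not be a well-defined semiflow; invoking it is therefore legitimate. Second, your closing Lipschitz-free alternative is slightly circular as written: the confinement you appeal to was itself obtained from the Gronwall estimate, which used $L$. To make that variant self-contained you would need to re-derive confinement by compactness, e.g.\ apply Arzel\`a--Ascoli to the trajectories stopped at their exit times $\tau_j^*$, note that any uniform limit solves $\dot y=g(y)$, $y(0)=x$, hence equals $\Phi(\cdot,x)\subset\Gamma$ by uniqueness, and derive a contradiction with the exit points lying on $\partial K$ at distance $1$ from $\Gamma$. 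Since this is an aside, it does not affect the validity of your main argument.
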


\begin{theorem}[Mischaikow et al.~\cite{Mis95}]\label{thm:mischaikow}
 Let $\Theta$ be an asymptotically autonomous semiflow with limit semiflow $\Phi$, and let its orbit $\{\Theta(t,s,x)\colon t\in[0,\infty)\}$ have compact closure in $\mathbb{R}^n$. 
 Then $\Lambda_{\Theta}(s,x_0)$ is invariant and chain recurrent for the semiflow $\Phi$ and attracts $\Theta(t,s,x_0)$.
\end{theorem}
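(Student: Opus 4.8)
The plan is to follow the standard strategy for $\omega$-limit sets of asymptotically autonomous semiflows, extracting everything from two ingredients: the cocycle identity satisfied by $\Theta$ and the convergence $\Theta\to\Phi$ encoded in asymptotic autonomy. Write $y(t)\coloneqq\Theta(t,s,x_0)$ for the trajectory and $\Lambda\coloneqq\Lambda_\Theta(s,x_0)$. First I would record the elementary topological facts: since the forward orbit has compact closure, $\Lambda$ is a nonempty compact set, and the trajectory is attracted to it, i.e. $\operatorname{dist}(y(t),\Lambda)\to0$ as $t\to\infty$. Both follow from the usual subsequence argument: if attraction failed there would be $t_k\to\infty$ with $y(t_k)$ staying a fixed distance from $\Lambda$, yet compactness of the orbit closure would force a convergent subsequence whose limit lies in $\Lambda$ by definition, a contradiction.

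Next I would prove invariance of $\Lambda$ under $\Phi$, which is where the cocycle identity and asymptotic autonomy first combine. For positive invariance, take $a\in\Lambda$, so $y(t_j)\to a$ for some $t_j\to\infty$, and fix $t\ge0$. The cocycle property gives $\Theta(t_j+t,s,x_0)=\Theta(t_j+t,t_j,y(t_j))$, and applying asymptotic autonomy to the right-hand side (time increment $t\to t$, base time $t_j\to\infty$, base point $y(t_j)\to a$) yields $\Theta(t_j+t,s,x_0)\to\Phi(t,a)$; since $t_j+t\to\infty$ this places $\Phi(t,a)\in\Lambda$. For full invariance, given $a\in\Lambda$ and $t\ge0$ I would look at $z_j\coloneqq\Theta(t_j-t,s,x_0)$ for large $j$; these lie in the compact orbit closure, so a subsequence converges to some $b\in\Lambda$, and the identity $\Theta(t_j,s,x_0)=\Theta(t_j,t_j-t,z_j)$ together with asymptotic autonomy forces $\Phi(t,b)=\lim_j y(t_j)=a$, as required.

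The heart of the argument, and the main obstacle, is chain recurrence. The preparatory step is to upgrade the sequential definition of asymptotic autonomy to a uniform shadowing estimate: for every compact $K\subset\mathbb{R}^n$, every bounded window $[T,2T]$, and every $\eta>0$ there is an $S$ with $\|\Theta(\tau+\sigma,\sigma,z)-\Phi(\tau,z)\|<\eta$ for all $\sigma\ge S$, $\tau\in[T,2T]$, $z\in K$. This follows by contradiction from the sequential definition together with compactness of $[T,2T]\times K$ and continuity of $\Phi$. With this in hand I would build a closed $(\varepsilon,T)$-chain through any $x\in\Lambda$: choose two large times $\sigma_0<\sigma_1$ with $y(\sigma_0),y(\sigma_1)$ within $\delta$ of $x$ and $\sigma_1-\sigma_0\ge T$ (possible because $x$ is visited infinitely often), partition $[\sigma_0,\sigma_1]$ into steps $\tau_{k-1}<\tau_k$ of length in $[T,2T]$, and take chain points $x_k\in\Lambda$ within $\delta$ of $y(\tau_{k-1})$, with $x_1=x_{m+1}=x$. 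Writing $y(\tau_k)=\Theta(\tau_k,\tau_{k-1},y(\tau_{k-1}))$ and inserting $\Phi(\tau_k-\tau_{k-1},y(\tau_{k-1}))$, the triangle inequality splits the error $\|\Phi(\tau_k-\tau_{k-1},x_k)-x_{k+1}\|$ into a uniform-continuity term for $\Phi$, a shadowing term, and a distance-to-$\Lambda$ term, each made $<\varepsilon/3$ by choosing $\delta$ small and $\sigma_0\ge S$.

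The main obstacle is precisely this last construction: turning the genuinely time-dependent, non-recurrent trajectory $y$ into a closed chain for the autonomous limit flow $\Phi$. Everything hinges on (a) having the uniform shadowing estimate, so that the true trajectory approximates $\Phi$-orbits with a single $\delta$ over all the finitely many steps simultaneously, and (b) exploiting recurrence of $y$ near $x$ to close the chain. The attraction and invariance parts are comparatively routine applications of the cocycle identity, so I would expect to spend most of the effort on the uniform-convergence lemma and the careful bookkeeping of the three error terms, keeping all points inside one fixed compact neighbourhood of the orbit closure so that uniform continuity of $\Phi$ applies.
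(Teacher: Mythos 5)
The paper does not actually prove this statement: Theorem~\ref{thm:mischaikow} is quoted from Mischaikow, Smith and Thieme~\cite{Mis95}, and Appendix~\ref{ap:mischaikow} merely restates it (together with the relevant definitions) for use in the proof of Theorem~\ref{thm:acyc_suf}, so there is no internal proof to compare yours against. Judged on its own, your reconstruction is correct and follows the strategy of the original reference. The subsequence/compactness argument gives nonemptiness, compactness and attraction; the combination of the cocycle identity with the sequential definition of asymptotic autonomy (constant increment $t$, base times $t_j\to\infty$, base points $y(t_j)\to a$, respectively $y(t_j-t)\to b$ along a subsequence) yields positive and full invariance exactly in the sense of Definition~\ref{def:mischaikow_inv}; and your treatment of chain recurrence contains the two genuinely needed ingredients: the compactness upgrade of the sequential convergence $\Theta\to\Phi$ to a uniform shadowing estimate on $[T,2T]\times K$, and the splicing of the trajectory over $[\sigma_0,\sigma_1]$ into an $(\varepsilon,T)$-chain, closed using recurrence of $y$ near $x\in\Lambda$ and with intermediate chain points pushed into $\Lambda$ via the attraction property (so that the chain lies in $\Lambda$, as Definition~\ref{def:mischaikow_chain_rec} requires). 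Your three-term error split per link is sound, and you correctly observe that no error accumulates across links, since the chain condition is per-step; in a full write-up you would only need to spell out the step partition (write $\sigma_1-\sigma_0=mT+r$ with $0\leq r<T$ and absorb $r$ into one step of length in $[T,2T)$) and fix once and for all the compact set on which uniform continuity of $\Phi$ is invoked, e.g., $[T,2T]$ times a closed neighbourhood of the orbit closure united with $\Lambda$. Incidentally, you silently corrected the paper's typo $\Theta(s,t_j,x_0)$ in the definition of $\Lambda_\Theta$, which should read $\Theta(t_j,s,x_0)$, consistent with your $y(t_j)$.
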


% \bibliographystyle{siamunsrt}
% \bibliography{bibliography}

\end{document}